\pgfplotsset{compat=1.10}
\definecolor{linkblue}{HTML}{001487}
\theoremstyle{plain}
\newtheorem{theorem}{Theorem}[section]
\newtheorem{lemma}[theorem]{Lemma}
\newtheorem{corollary}[theorem]{Corollary}
\newtheorem{proposition}[theorem]{Proposition}
\crefname{question}{Question}{Questions}
\theoremstyle{definition}
\newtheorem{definition}[theorem]{Definition}
\newtheorem{example}[theorem]{Example}
\newcommand*{\cD}{\mathcal{D}}
\newcommand*{\cE}{\mathcal{E}}
\newcommand*{\cF}{\mathcal{F}}
\newcommand*{\cG}{\mathcal{G}}
\newcommand*{\cJ}{\mathcal{J}}
\newcommand*{\cL}{\mathcal{L}}
\newcommand*{\cI}{\mathcal{I}}
\newcommand*{\cN}{\mathcal{N}}
\newcommand*{\cM}{\mathcal{M}}
\newcommand*{\cP}{\mathcal{P}}
\newcommand*{\cU}{\mathcal{U}}
\newcommand*{\LO}{\mathrm{LO}}
\newcommand*{\LOCC}{\mathrm{LOCC}}
\newcommand*{\PPT}{\mathrm{PPT}}
\newcommand*{\SEP}{\mathrm{SEP}}
\newcommand*{\St}{\mathrm{S}}
\newcommand*{\sgn}{\mathrm{sgn}}
\newcommand*{\cptp}{\mathrm{CPTP}}
\newcommand*{\id}{\mathds{1}}
\newcommand*{\tr}{\mathrm{tr}}
\newcommand*{\ket}[1]{| #1 \rangle}
\newcommand*{\bra}[1]{\langle #1 |}
\newcommand{\proj}[1]{|#1\rangle\langle #1|}
\newcommand*{\braket}[1]{\langle #1 \rangle}
\newcommand*{\abs}[1]{| #1 |}
\newcommand*{\qi}{\mathrm{QI}}
\newcommand*{\ops}{\mathrm{L}}
\newcommand*{\supops}{\mathrm{T}}
\newcommand*{\states}{\mathrm{D}}
\newcommand*{\conv}{\mathrm{conv}}
\newcommand*{\CPTP}{\mathrm{CPTP}}
\newcommand*{\CP}{\mathrm{CP}}
\newcommand*{\TP}{\mathrm{TP}}
\newcommand*{\CPTN}{\mathrm{CPTN}}
\newcommand*{\HP}{\mathrm{HP}}
\newcommand*{\HPTP}{\mathrm{HPTP}}
\newcommand{\norm}[1]{\left\lVert#1\right\rVert}
\DeclareRobustCommand{\abbrevcrefs}{%
\Crefname{theorem}{Thm.}{Thms.}%
\Crefname{corollary}{Cor.}{Cors.}%
\Crefname{lemma}{Lem.}{Lems.}%
\Crefname{proposition}{Prop.}{Props.}%
\Crefname{equation}{Eq.}{Eqs.}%
\Crefname{example}{Ex.}{Exs.}%
}
\DeclareRobustCommand{\Cshref}[1]{{\abbrevcrefs\Cref{#1}}}
\title{Circuit cutting with classical side information}
 \author{\normalsize Christophe Piveteau$^{*\, 1}$, Lukas Schmitt$^{*\, 1,2}$ and David Sutter$^{2}$}
  \affil{\small $^{1}$Institute for Theoretical Physics, ETH Zurich\\
  $^{2}$IBM Quantum, IBM Research Europe -- Zurich
 }
 \date{}
\begin{document}

\maketitle
\def\thefootnote{*}\footnotetext{These authors equally contributed to this work.}
\def\thefootnote{\arabic{footnote}}

\begin{abstract}
Circuit cutting is a technique for simulating large quantum circuits by partitioning them into smaller subcircuits, which can be executed on smaller quantum devices. The results from these subcircuits are then combined in classical post-processing to accurately reconstruct the expectation value of the original circuit. Circuit cutting introduces a sampling overhead that grows exponentially with the number of gates and qubit wires that are cut.
Many recently developed quasiprobabilistic circuit cutting techniques leverage classical side information, obtained from intermediate measurements within the subcircuits, to enhance the post-processing step. In this work, we provide a formalization of general circuit cutting techniques utilizing side information through quantum instruments. With this framework, we analyze the advantage that classical side information provides in reducing the sampling overhead of circuit cutting.
Surprisingly, we find that in certain scenarios, side information does not yield any reduction in sampling overhead, whereas in others it is essential for circuit cutting to be feasible at all. Furthermore, we present a lower bound for the optimal sampling overhead with side information that can be evaluated efficiently via semidefinite programming and improves on all previously known lower bounds.

\end{abstract}


\section{Introduction}
A major difficulty in the realization of useful quantum computation is the limited number of available qubits.
This problem is especially pressing, considering that quantum error correction is likely to require a large number of physical qubits to realize even a few logical qubits.
One approach to solving this issue is to connect multiple quantum devices using quantum communication links over a network.
However, such hardware is likely to remain out of reach in the coming years.

To address this shortcoming, recent years have seen the advent of circuit cutting techniques~\cite{BSS16,PHOW20,Mitarai_2021,MF_21,piv23,SPS24,anguspaper,almu_nature_04}.  
The idea is to partition large quantum circuits into smaller subcircuits that can each be executed on a smaller quantum device.
Using appropriate classical post-processing, the expectation value of an observable in the original circuit can be reconstructed with arbitrary precision --- at the cost of a larger variance. This implies that to achieve a given (fixed) accuracy we require more executions of the circuit. This \emph{sampling overhead} scales exponentially in the number of gates and qubit wires that were ``cut'' in the circuit partitioning.
Despite its inevitable exponential overhead, circuit cutting is of high practical importance, and hence it is of interest to minimize the overhead as much as possible.

In this paper, we focus on the most popular and widely studied form of circuit cutting, which is based on the quasiprobability simulation technique.
We note that other approaches do exist~\cite{BSS16,forging22}, though in comparison, they typically lack the wide applicability.
The key idea in quasiprobabilistic circuit cutting is to decompose the entangling gates across the circuit partitions into operations that act locally on each circuit partition.
More mathematically, for a bipartite unitary channel $\cU_{AB}$, one studies \emph{quasiprobability decompositions} (QPD)
\begin{align}
    \cU_{AB} = \sum_{i=1}^m a_i \cF_i \, ,
\end{align}
where $m\geq 1$ and $a\in\mathbb{R}^m$.
Depending on whether we allow the circuit partitions to exchange classical communication or not, $\cF_i$ are either local operations ($\LO$) or local operations with classical communication ($\LOCC$).
Given such a QPD, the quasiprobability simulation technique can simulate the nonlocal unitary $\cU$ by randomly replacing it by $\cF_i$ and performing appropriate classical post-processing of the circuit output.
This leads to a sampling overhead of $\norm{a}_1^2$ where $\norm{a}_1\coloneqq \sum_i\abs{a_i}$.
When the QPDs of multiple nonlocal gates in the circuit are combined together, the $\ell_1$-norm of their coefficients scales multiplicatively, which leads to the exponential scaling of the sampling overhead in circuit cutting.
For example, a circuit with $n$ nonlocal gates $\cU$ to be cut will incur a sampling overhead of $\norm{a}_1^{2n}$.

Clearly, it is desirable to pick a QPD that leads to the smallest possible sampling overhead.
This optimal overhead is called the \emph{quasiprobability extent}\footnote{Previous work sometimes called this quantity ``$\gamma$-factor''~\cite{piv23,SPS24,BPS23}. We instead adopt the naming from~\cite{christophe_phd}.} $\gamma(\cE)$ of a bipartite channel $\cE$, which is defined as the smallest $\ell_1$-norm $\norm{a}_1$ over all valid QPDs of $\cE$.
To differentiate between situations in which the $\cF_i$ are allowed to use classical communication or not, we denote the $\LO$-assisted and $\LOCC$-assisted extent by $\gamma_{\LO}$ and $\gamma_{\LOCC}$, respectively.
Clearly, as $\LO \subseteq \LOCC$, one has $\gamma_{\LOCC}\leq\gamma_{\LO}$.

The quasiprobability simulation method can be generalized by allowing the operations $\cF_i$ to additionally produce classical information e.g.~by performing intermediate measurements, which is then incorporated into the classical post-processing procedure.
We refer to this improved technique as quasiprobability simulation with \emph{classical side information}.
The utilization of classical side information can be mathematically captured by allowing certain non-completely positive and non-trace-preserving maps to be used in the QPD.
This enlarged sets of allowed operations will be denoted by $\LO^{\star}$ and $\LOCC^{\star}$, and the corresponding extents by $\gamma_{\LO^{\star}}$ and $\gamma_{\LOCC^{\star}}$ respectively.
Since $\LO \subseteq \LO^\star$ and $\LOCC \subseteq \LOCC^\star$ we have $\gamma_{\LO} \geq \gamma_{\LO^\star}$ and $\gamma_{\LOCC} \geq \gamma_{\LOCC^\star}$.
Given that there is not much reason not to allow for classical side information in practice, and we want to minimize the sampling overhead, $\gamma_{\LO^\star}$ and $\gamma_{\LOCC^\star}$ are the operationally relevant quantities for practical applications.

While specific quasiprobabilistic protocols utilizing classical side information have been around in literature for a while (see for example~\cite{endo18} in the context of quantum error mitigation and~\cite{PHOW20,Mitarai_2021} for quasiprobabilistic circuit cutting), only recent work~\cite{piv23,SPS24} formalized the optimal sampling overheads with side information through the quantities $\gamma_{\LO^{\star}}$ and $\gamma_{\LOCC^{\star}}$.\footnote{In~\cite{piv23,SPS24} $\gamma_{\LO^{\star}}$ and $\gamma_{\LOCC^{\star}}$ were denoted by $\gamma_{\LO}$ and $\gamma_{\LOCC}$.}
Later, this formalization has been rephrased in different terms in~\cite{anguspaper}.
However, the importance of classical side information for the purpose of circuit cutting has remained poorly understood.
For instance, previous work does not address whether classical side information is necessary in the first place and whether it enables a smaller sampling overhead.

The quantities $\gamma_{\LO^\star}$ and $\gamma_{\LOCC^\star}$ have proven to be challenging to compute.
This is not surprising, as they are closely related to entanglement measures~\cite{piv23}, which are known to be NP-hard to evaluate in general~\cite{sevag2010_nphard}.
We know the exact values of these quantities only for a few special cases, such as for unitaries with a Schmidt-like decomposition~\cite{SPS24,anguspaper} and for pure states~\cite{piv23,vidal99}.
For more general quantum channels, only little is known about their extent.
Recent work~\cite{xin24} has made progress in this direction, by finding lower bounds of $\gamma_{\LOCC}$ through a relaxation of $\LOCC$ to the set of positive partial transpose channels ($\PPT$), as $\gamma_{\LOCC}\geq\gamma_{\PPT}$ where $\gamma_{\PPT}$ is defined analogously to $\gamma_{\LO}$ and $\gamma_{\LOCC}$.
The quantity $\gamma_{\PPT}$ is expressible in terms of a semidefinite program (SDP) and can thus be efficiently evaluated numerically. 
Unfortunately, these $\PPT$-based lower bounds are of limited applicability, as they only bound $\gamma_{\LOCC}$, but not $\gamma_{\LOCC^\star}$, which is the physically relevant quantity.
In other words, they do not make any statement about the performance of circuit cutting assisted by classical side information.

In this paper, we advance the understanding of circuit cutting with classical side information with following contributions:
\begin{enumerate}[(i)]
    \item We mathematically formalize quasiprobabilitsic circuit cutting with classical side information using the notion of \emph{quantum instruments}.
    Essentially, we replace the channels $\cF_i$ with quantum instruments, which describe the most general quantum processes that output classical information.\footnote{Our work generalizes the formalism for classical side information in the quasiprobability simulation of non-physical operations, which was recently introduced in~\cite{zhao2023power}.}
    \item We show that classical side information is crucial for circuit cutting in the absence of classical communication.
    We prove that $\gamma_{\LO}(\cE)$ is finite if and only if $\cE$ is non-signaling, whereas $\gamma_{\LO^\star}(\cE)$ is finite for all $\CPTP$ maps $\cE$. We refer to~\cref{prop_LO_vs_LO_star} for the precise statement.
    \item We show that classical side information is useless in the $\PPT$ setting, i.e.~$\gamma_{\PPT}(\cE)=\gamma_{\PPT^\star}(\cE)$ for all $\CPTP$ maps $\cE$ (see~\cref{prop_PPT_PTT_star}).
    This implies that the SDP lower bound introduced in~\cite{xin24} is also applicable to $\gamma_{\LOCC^\star}$ and not just $\gamma_{\LOCC}$ as discussed in~\cref{sec_ppt}.
    \item We study the $\PPT$-based lower bound and show that it is tight in certain situations and always better than previously known lower bounds. We refer to~\cref{sec_ppt} and in particular to~\cref{corr_statebound} for the rigorous statements.
\end{enumerate}
The second and third result are summarized in~\cref{fig_results}.
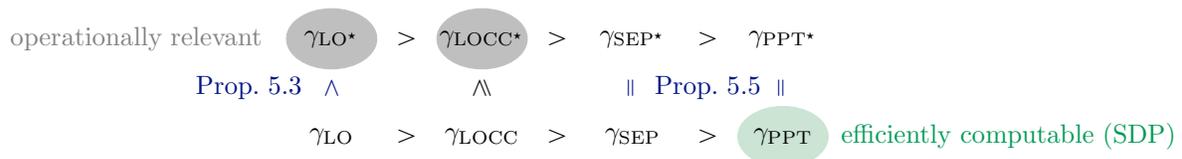
\begin{figure}[!htb]
    \centering
    \begin{tikzpicture}
\def \x{2};
\def \y{1.3};
\draw[gray!50,fill=gray!50] (0,0) ellipse (0.6 and 0.4);
 \node at (0,0) {$\gamma_{\mathrm{LO}^{\star}}$}; 
 \node at (0,-\y) {$\gamma_{\mathrm{LO}}$}; 
 \draw[gray!50,fill=gray!50] (\x,0) ellipse (0.6 and 0.4);
 \node at (\x,0) {$\gamma_{\mathrm{LOCC}^{\star}}$}; 
 \node at (\x,-\y) {$\gamma_{\mathrm{LOCC}}$}; 
  \node at (2*\x,0) {$\gamma_{\mathrm{SEP}^{\star}}$}; 
 \node at (2*\x,-\y) {$\gamma_{\mathrm{SEP}}$}; 
 \node at (3*\x,0) {$\gamma_{\mathrm{PPT}^{\star}}$}; 
 \draw[ForestGreen!20,fill=ForestGreen!20] (3*\x,-\y) ellipse (0.6 and 0.4); 
 \node at (3*\x,-\y) {$\gamma_{\mathrm{PPT}}$};  

\node[gray] at (-1.3*\x,0) {operationally relevant};
\node[ForestGreen] at (4.5*\x,-\y) {efficiently computable (SDP)};

 \node[rotate=90] at (0,-0.5*\y) {\color{linkblue}$>$};
  \node[rotate=90] at (\x,-0.5*\y) {$\geq$};
    \node[rotate=90] at (2*\x,-0.5*\y) {\color{linkblue}$=$};
  \node[rotate=90] at (3*\x,-0.5*\y) {\color{linkblue}$=$};
  \node at (0.5*\x,0) {$>$}; 
  \node at (1.5*\x,-\y) {$>$}; 
  \node at (0.5*\x,-\y) {$>$}; 
  \node at (1.5*\x,0) {$>$};
  \node at (2.5*\x,0) {$>$};  
  \node at (2.5*\x,-\y) {$>$};

  \node at (-1.1,-0.5*\y) {\Cshref{prop_LO_vs_LO_star}};
  \node at (1.8*\x+1.4,-0.5*\y) {\Cshref{prop_PPT_PTT_star}};

\end{tikzpicture}
\caption{Relations between different sampling overheads for $\CPTP$ maps. A strict inequality $>$ implies that there exists a $\CPTP$ map for which both extents differ. The gray quantities $\gamma_{\mathrm{LO}^\star}$ and $\gamma_{\mathrm{LOCC}^\star}$ are the physically relevant quantities that capture the optimal overhead of circuit cutting without and with classical communication. The green quantity $\gamma_{\mathrm{PPT}}$ can be computed efficiently via a semidefinte program (SDP) and is a lower bound to the physically relevant quantities. Here, $\SEP$ denotes the set of all separable quantum channels. The sets $\mathrm{LO}^\star, \mathrm{LOCC}^\star, \mathrm{SEP}^\star, \mathrm{PPT}^\star$ are extensions of $\mathrm{LO}, \mathrm{LOCC}, \mathrm{SEP}, \mathrm{PPT}$ where we additionally allow for classical side information to be incorporated in the classical post-processing.  We refer to~\cref{sec_preliminaries} for a precise definition of these sets. In~\cref{app:remaining_inequalities} we show that all horizontal inequalities are strict. }
\label{fig_results}
\end{figure}

\section{Preliminaries}\label{sec_preliminaries}
\subsection{Notation}
For two finite-dimensional complex Hilbert spaces $I$, $O$, we denote by $\ops(I\rightarrow O)$ the set of linear operators from $I$ to $O$ and by $\supops(I \rightarrow O)\coloneqq \ops(\ops(I \rightarrow I),\ops(O \rightarrow O))$ the set of linear superoperators from $I$ to $O$.
We denote the set of completely positive maps by $\CP(I \rightarrow O)$ and the set of trace-preserving maps by $\TP(I \rightarrow O)$.
The set of quantum channels is $\CPTP(I \rightarrow O)\coloneqq\CP(I \rightarrow O)\cap\TP(I \rightarrow O)$.
The set of completely positive trace non-increasing maps is denoted by $\CPTN(I \rightarrow O)$ and the set of Hermitian preserving superoperators by $\HP(I\rightarrow O)$.
The set of trace-preserving, Hermitian preserving maps is denoted $\HPTP(I\rightarrow O)$.
When the input and output systems are the same, we drop the arrow in the notation, for example $\ops(I \rightarrow I)=\ops(I)$ and $\supops(I \rightarrow I)=\supops(I)$.
The set of density matrices on $O$ is denoted by $\states(O)$. For pure (rank one) states we use the symbol $\overline{\mathrm{D}}(O)$. 
The Choi isomorphism $\cJ:\supops(I\rightarrow O)\rightarrow\ops(I'O)$ is defined as $\cJ(\cE)\coloneqq (\cI_{I'}\otimes\cE)(\proj{\Psi_{I'I}})$ where $I'$ is a copy of $I$ and $\ket{\Psi_{I'I}}\coloneqq (\dim(I))^{-1/2} \sum_i \ket{i}_{I'}\ket{i}_{I}$ is a maximally entangled state across $I'$ and $I$ for some fixed orthonormal basis $\ket{i}$.
It is well known that for a linear superoperator $\cE\in\supops(I \rightarrow O)$ one has $\cE\in\CP(I \rightarrow O)\Leftrightarrow\cJ(\cE)\geq 0$ and $\cE\in\TP(I \rightarrow O)\Leftrightarrow\tr_O[\cJ(\cE)]=\frac{1}{\dim(I)}\id_I$.

The most general measurement process in quantum mechanics can be described in terms of a channel $\cE_{I\rightarrow OX}\in\cptp(I\rightarrow OX)$ which maps every state to a classical-quantum state
\begin{equation}\label{eq_instrument}
    \cE_{I\rightarrow OX}(\rho_I) = \sum_i \cE_i(\rho_I)\otimes \proj{i}_X \, ,
\end{equation}
for some $\cE_i\in\supops(I \rightarrow O)$, where $I$ and $O$ are the input and output quantum systems and $X$ is a classical register that stores the measurement outcome.
It is easy to check that all $\cE_i$ must lie in $\CPTN(I\rightarrow O)$ and, additionally, $\sum_i\cE_i = \tr_X\circ \cE_{I\rightarrow OX} \in\CPTP(I\rightarrow O)$.
In contrast, all sets of operations $\cE_i$ that meet these two properties describe a quantum measurement by~\Cref{eq_instrument}.
This motivates the definition of a (discrete) quantum instrument from $I$ to $O$ as a discrete family $(\cE_i)_i$ of nonzero completely positive operators $\cE_i\in\CP(I \rightarrow O)$ such that $\sum_i\cE_i\in\CPTP(I \rightarrow O)$.
The set of discrete quantum instruments from $I$ to $O$ is denoted $\qi(I \rightarrow O)$.

We say a $(\cF_j)_{j\in J}\in\qi(I\rightarrow O)$ is a coarse graining of $(\cE_i)_{i\in K}\in\qi(I\rightarrow O)$ if there exists a surjective map $f:K\rightarrow J$ such that \smash{$\forall j\in J: \cF_j = \sum_{i:f(i)=j} \cE_i$}.
The intuitive meaning is that $(\cF_j)_j$ can be understood as the measurement $(\cE_i)_i$ where some of the measurement outcomes are treated as the same.
So, the instrument $(\cF_j)_j$ can be realized by first implementing $(\cE_i)_i$ and then discarding some classical information.

A bipartite state $\rho\in\states(A B)$ is not entangled (also called separable) if it can be written as
\begin{equation}
    \rho =\sum_i p_i \sigma_i\otimes \tau_i \, ,
\end{equation}
for some probability distribution $(p_i)_i$ and $\sigma_i\in\states(A),\tau_i\in\states(B)$.
The set of separable states is denoted by $\SEP(A,B)$.
Testing for separability is challenging, as it is not a convex constraint.
For this reason, a common technique is to relax this constraint by defining the larger set
\begin{equation}
    \PPT(A,B) \coloneqq \{\rho\in\states(AB) : \rho^{T_B}\geq 0\} \, ,
\end{equation}
where $\rho^{T_B}$ denotes the partial transposition of $\rho$ on the system $B$.
Indeed, $\SEP(A,B)\subset\PPT(A,B)$ as any separable state has a non-negative partial transpose.\footnote{Later, we will also denote the set of separable and $\PPT$ chanels with the symbols $\SEP$ and $\PPT$. It will always be clear from context which set is considered.}
This is the well-known $\PPT$ criterion which states that any state with non-positive partial transpose must be entangled~\cite{peres96,horodecki96}.
\subsection{Quasiprobability simulation}\label{sec:qps_technique}
Suppose we are given a quantum circuit that prepares some state and then measures an observable.
We are interested in predicting the expectation value of this observable with a quantum computer that is restricted in the sense that it cannot realize a full universal gate set.
Let us separate the set of all quantum operations into the ``free'' operations (denoted by the set $S$) which the quantum computer can physically execute, as well as ``non-free'' operations\footnote{The expression ``free'' is inspired by the literature of quantum resource theories. One can often equate the decomposition set $S$ with the free operations of some quantum resource theory -- for more details, see \cite{christophe_phd}.}.
If the desired quantum circuits contains non-free operations, it is a priori impossible to directly execute the circuit on the given quantum computer.
However, using the quasiprobability simulation technique, it is still possible to evaluate the expectation value of the original circuit with the restricted quantum computer, albeit with a sampling overhead.

Quasiprobability simulation has been applied to a variety of different applications, each corresponding to a different choice of the set of free operations $S$.
For example, it can be used to simulate near-Clifford circuits on a classical computer~\cite{PWB15,HC17} (here $S$ is chosen to be Clifford operations) and for quantum error mitigation~\cite{TBG17,endo18} (here $S$ is chosen to be the set of noisy operations executable by the quantum computer).
In the context of quasiprobabilistic circuit cutting, the set $S$ is chosen to contain local operations across the circuit bipartitions, and quasiprobability simulation allows for the simulation of non-local gates~\cite{Mitarai_2021,MF_21,piv23}.
We will elaborate more on the application to circuit cutting in \Cref{sec:circuit_cutting}, and in the following, we briefly summarize how the quaspirobability technique operates.
For a comprehensive treatment, we refer the reader to~\cite{christophe_phd}.
Suppose that our task is to simulate some non-free operation $\cE\in\supops(I\rightarrow O)$ using free operations.
Furthermore, assume that we can decompose $\cE$ as a linear combination using elements $\cF_i\in S\subset\supops(I\rightarrow O)$
\begin{equation} \label{eq_QPD2}
    \cE = \sum_{i=1}^m a_i \cF_i \, ,
\end{equation}
where $a_i\in\mathbb{R}$.
This decomposition is called a \emph{quasiprobability decomposition} (QPD), because some of the coefficients $a_i$ might generally be negative.
Using some simple algebra, we can rewrite this QPD as to remove the negativity form the weight and make it a proper probabilistic mixture
\begin{equation}\label{eq_QPD3}
    \cE = \sum_{i=1}^m p_i \mathcal{F}_i \, \mathrm{sign}(a_i) \norm{a}_1\, ,
\end{equation}
with the probability distribution $p_i = |a_i|/\norm{a}_1$.
The idea of quasiprobability simulation is to probabilistically replace the operation $\cE$ by $\cF_i$ with probability $p_i$, and then weight the final measurement outcome by $\mathrm{sign}(a_i) \norm{a}_1$.
To see that this indeed provides us with an unbiased estimate of the expectation value, let us consider the simplest instance where the circuit consists purely of the operation $\cE$ followed by the measurement of the observable $O$.
Then,
\begin{align}
    \tr[O \cE(\rho)]
  = \sum_{i=1}^m a_i \tr[O \cF_i(\rho)]
  = \sum_{i=1}^m p_i \tr[O \cF_i(\rho)] \sgn(a_i)\norm{a}_1 \, .  \label{eq:expval_monte_carlo}
\end{align}
Hence, the random replacement of $\cE$ by $\cF_i$, coupled with the appropriate post-processing, provides an unbiased estimator for $\tr[O\cE(\rho)]$.
However, since we are scaling up the circuit output by the factor $\pm\norm{a}_1$, this estimator generally has a higher variance compared to directly executing the target circuit in the first place.
Therefore, a higher number shots is needed in order to reach the same confidence for the expectation value.
\begin{proposition}\label{prop_qpd_overhead}
Let $\epsilon > 0$ and $\delta\in (0,1]$.
Using $\frac{2\norm{a}^2_1}{\epsilon^2} \log\left(\frac{2}{\delta}\right)$ shots, quasiprobability simulation produces an estimate for $\tr[O \cE(\rho)]$ up to an additive error $\epsilon$ with probability at least $1-\delta$.
\end{proposition}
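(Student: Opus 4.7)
The plan is to estimate $\tr[O\cE(\rho)]$ via a Monte Carlo average of the single-shot random variable described around \eqref{eq:expval_monte_carlo}, and then invoke Hoeffding's inequality. Concretely, I would formalize a single ``shot'' as the following procedure: draw an index $i \in \{1,\dots,m\}$ from the distribution $p_i = |a_i|/\norm{a}_1$, execute $\cF_i$ on $\rho$, measure the observable $O$ to obtain outcome $m \in [-1,1]$ (assuming without loss of generality that $\norm{O}_\infty \le 1$, which is the implicit normalization making the stated bound meaningful), and output
\begin{equation*}
    X \coloneqq \sgn(a_i)\,\norm{a}_1 \, m \,.
\end{equation*}

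The first step is to verify that $\E[X] = \tr[O\cE(\rho)]$; this is essentially the calculation already displayed in \eqref{eq:expval_monte_carlo}, since conditioned on $i$ the expected measurement outcome is $\tr[O\cF_i(\rho)]$. The second step is the boundedness observation: because $|m|\le 1$, we have $X \in [-\norm{a}_1, \norm{a}_1]$ deterministically, regardless of the realized $i$ or the measurement outcome.

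Now let $X_1,\dots,X_N$ be $N$ independent copies and $\bar X = \frac{1}{N}\sum_{k=1}^N X_k$ the empirical mean. Hoeffding's inequality applied to random variables bounded in an interval of length $2\norm{a}_1$ yields
\begin{equation*}
    \Pr\bigl[\,|\bar X - \tr[O\cE(\rho)]| \ge \epsilon\,\bigr] \;\le\; 2 \exp\!\left(-\frac{2 N \epsilon^2}{(2\norm{a}_1)^2}\right) \;=\; 2\exp\!\left(-\frac{N\epsilon^2}{2\norm{a}_1^2}\right).
\end{equation*}
Setting the right-hand side at most $\delta$ and solving for $N$ gives $N \ge \frac{2\norm{a}_1^2}{\epsilon^2}\log(2/\delta)$, which is exactly the claimed shot count.

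There is no real obstacle here — the only subtlety is the implicit normalization $\norm{O}_\infty \le 1$ (needed so that the per-shot random variable is bounded by $\norm{a}_1$), and the fact that the argument extends verbatim from the one-gate case in \eqref{eq:expval_monte_carlo} to a general circuit with multiple non-free operations by applying the QPD independently at each such location: the per-shot random variable is then scaled by the product of the corresponding $\pm\norm{a^{(\ell)}}_1$ factors, its expectation remains the true expectation value, and Hoeffding is applied with the product of the $\ell_1$-norms playing the role of $\norm{a}_1$. For the statement as given, which concerns a single QPD, Hoeffding applied to the bounded estimator $X$ delivers the bound directly.
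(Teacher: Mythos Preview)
Your argument is correct and is precisely the approach the paper indicates: the paper does not give a detailed proof but simply states that the proposition is a consequence of Hoeffding's inequality (pointing to~\cite{PWB15,TBG17,endo18}), and you have filled in exactly those standard details --- unbiasedness from \eqref{eq:expval_monte_carlo}, boundedness of the per-shot estimator in $[-\norm{a}_1,\norm{a}_1]$ under the implicit normalization $\norm{O}_\infty\le 1$, and then Hoeffding.
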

This proposition is a simple consequence of Hoeffding's inequality, and we refer the reader to~\cite{PWB15,TBG17,endo18} for more details.
Since the number of shots required increases multiplicatively in $\|a\|_1^2$, it is desirable to pick a QPD that has the smallest possible $\ell_1$-norm of its coefficients.
Given a fixed set $S$ of operations that our computer can perform, we define the extent of a map $\cE$ as the smallest such achievable $\ell_1$-norm.
\begin{definition}\label{def:gamma}
  For a decomposition set $S\subset\supops(I\rightarrow O)$ and a target operation $\cE\in\supops(I \rightarrow O)$, we define the \emph{quasiprobability extent} of $\cE$ w.r.t. $S$ as
  \begin{equation}\label{eq:gamma}
    \gamma_{S}(\cE) \coloneqq \inf \Big\{ \sum_{i=1}^m \lvert a_i \rvert : \cE = \sum\limits_{i=1}^m a_i \cF_i \text{ where } m\geq 1, \cF_i\in S \textnormal{ and } a_i \in \mathbb{R} \Big\} \, .
  \end{equation}
\end{definition}
If no valid decomposition exists, we define the extent to be infinite.
An infinite extent indicates that a quasiprobabilistic simulation of $\cE$ with a decomposition set $S$ is not possible.

Previously, we have illustrated how quasiprobability simulation can estimate the expectation value of an observable on a circuit which contains one single non-free operation $\cE$.
The technique can naturally be applied to circuits which contain multiple such gates, given that each gate exhibits its own QPD into operations in $S$.
The main idea is to simply combine the individual QPDs into a large QPD of the total circuit, such that the previous consideration is again applicable.
As an example, suppose we have three operations $\cE_1,\cE_2,\cE_3$ that do no lie in $S$ and we want to implement the circuit 
\begin{equation} \label{eq_example_setting}
  \cU_4\circ \cE_3 \circ \cU_3 \circ \cE_2 \circ \cU_2 \circ \cE_1 \circ \cU_1 \, ,
\end{equation}
where $\cU_i$ denote concatenations of operations that lie in $S$.
Assuming we have the QPDs $\cE_j = \sum_{i}a^{(j)}_{i}\cF^{(j)}_{i}$, \Cref{eq_example_setting} can be rewritten as
\begin{equation}
  \sum\limits_{i_1,i_2,i_3} p^{(1)}_{i_1}p^{(2)}_{i_2}p^{(3)}_{i_3} \cU_4\circ \cF_{i_3}^{(3)} \circ \cU_4 \circ \cF_{i_2}^{(2)} \circ \cU_2 \circ \cF_{i_1}^{(1)} \circ \cU_1 \, \sgn(a_{i_1}^{(1)}a_{i_2}^{(2)}a_{i_3}^{(3)})\|a^{(1)}\|_1 \|a^{(2)}\|_1 \|a^{(3)}\|_1 
\end{equation}
where $p_i^{(j)}\coloneqq \abs{a_i^{(j)}} / \norm{a^{(j)}}_1$.
The Monte Carlo sampling procedure in this case works as follows: For every instance of an unsupported operation $\cE_j$ in the circuit, we separately and independently sample a random gate $\cF_i^{(j)}$ according to \smash{$p^{(j)}_i$} to replace it.
The sampled indices ($i_1$, $i_2$ and $i_3$) are stored, and at the end of the circuits we weight the measurement outcome by $\pm \|a^{(1)}\|_1 \|a^{(2)}\|_1 \|a^{(3)}\|_1$.
Note how the $\ell_1$-norms of the individual decompositions combine in a multiplicative fashion, so the overall sampling overhead generally scales exponentially in the number of unsupported operations.

The possibly infinite number of elements that a QPD can contain might seem daunting to deal with.
Fortunately, in many cases the decomposition set $S$ is convex, allowing for a simpler characterization with QPDs containing only two elements.
\begin{lemma}\label{lemma_convex}
    Let $S\subset\supops(I \rightarrow O)$ be a convex set and $\cE\in\supops(I \rightarrow O)$. Then
    \begin{equation}
        \gamma_{S}(\cE) = \inf \{a^+ + a^- :\cE=a^+\cE^+ - a^-\cE^- \text{ where } a^{\pm}\geq 0\text{ and } \cE^{\pm}\in S\} \, .
    \end{equation}
\end{lemma}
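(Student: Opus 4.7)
The plan is to establish the two inequalities $\gamma_S(\cE) \leq \inf\{a^+ + a^- : \ldots\}$ and $\gamma_S(\cE) \geq \inf\{a^+ + a^- : \ldots\}$ separately. The first one is essentially immediate: any feasible tuple $(a^+,a^-,\cE^+,\cE^-)$ for the right-hand side constitutes a valid QPD of $\cE$ in the sense of \Cref{def:gamma} with $m=2$, namely $\cE = a_1 \cF_1 + a_2 \cF_2$ with $a_1=a^+$, $a_2=-a^-$, $\cF_1=\cE^+$, $\cF_2=\cE^-$, and $\lvert a_1\rvert+\lvert a_2\rvert = a^++a^-$. So taking the infimum gives the desired upper bound for $\gamma_S(\cE)$.

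For the reverse inequality, the plan is to take an arbitrary QPD $\cE = \sum_{i=1}^m a_i \cF_i$ with $\cF_i\in S$ and collapse it into a two-element decomposition with the same $\ell_1$-norm by exploiting convexity. Concretely, I would discard terms with $a_i = 0$, partition the remaining indices into $I^+ = \{i : a_i > 0\}$ and $I^- = \{i : a_i < 0\}$, and set
\begin{equation*}
a^+ \coloneqq \sum_{i\in I^+} a_i, \qquad a^- \coloneqq \sum_{i\in I^-} \lvert a_i \rvert.
\end{equation*}
When $a^+>0$, define $\cE^+ \coloneqq (a^+)^{-1}\sum_{i\in I^+} a_i \cF_i$, which is a convex combination of elements of $S$ and hence lies in $S$ by convexity; analogously define $\cE^-\in S$ when $a^->0$. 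Then by construction $\cE = a^+\cE^+ - a^-\cE^-$ and $a^++a^- = \sum_i\lvert a_i\rvert$, so taking the infimum over QPDs yields the lower bound.

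The only mildly subtle point is handling degenerate cases. If $a^+=0$ (i.e.\ all $a_i\leq 0$) or $a^-=0$, the corresponding $\cE^\pm$ is not defined by the averaging formula; here I would simply pick an arbitrary element of $S$ to fill the slot, which is allowed since its coefficient is zero. If $S$ itself is empty, then no QPD of $\cE$ exists, so both $\gamma_S(\cE)$ and the infimum on the right-hand side equal $+\infty$ by the usual convention on empty infima, and the identity holds trivially. Aside from these bookkeeping issues, the proof is a one-step application of convexity, so I do not expect any genuine obstacle.
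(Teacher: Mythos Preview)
Your proposal is correct and follows essentially the same approach as the paper: split an arbitrary QPD into its positive and negative parts and use convexity of $S$ to form $\cE^{\pm}$ as convex combinations, yielding a two-term decomposition with the same $\ell_1$-norm. You are in fact slightly more careful than the paper in handling the degenerate cases $a^{\pm}=0$ and $S=\emptyset$.
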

\begin{proof}
    It suffices to show that any QPD of $\cE$ with respect to $S$ induces another QPD that has at most two elements and the same $\ell_1$-norm of its coefficients.
    Let $\cE=\sum_i a_i\cF_i$ be one such QPD.
    Define
    \begin{equation}
        a^+ \coloneq \sum_{i:a_i\geq 0} a_i,     
        \qquad      a^- \coloneq -\sum_{i:a_i< 0} a_i,
        \qquad \cE^+ \coloneq \frac{1}{a^+}\sum_{i:a_i\geq 0}a_i\cE_i,     
        \quad \textnormal{and}\quad      \cE^- \coloneq \frac{1}{a^-}\sum_{i:a_i< 0}a_i\cE_i \, .
    \end{equation}
    Clearly, $\cE = a^+\cE^+ - a^-\cE^-$ and $a^++a^- = \sum_i\lvert a_i\rvert$ and $\cE^{\pm}\in S$ by convexity.
\end{proof}

The quasiprobability simulation technique cannot only simulate non-free quantum channels, but also non-free states.
Indeed, states can be interpreted as a special case of quantum channels with trivial input spaces, so the above methods can be readily applied.\footnote{In some applications of quasiprobability simulation, the focus more commonly lies on the simulation of non-free states (instead of channels). For instance, the classical simulation of Clifford+T circuits can be achieved by simulating magic states with stabilizer states~\cite{HC17,HG19}.}
Correspondingly, the quasiprobability extent of a state $\rho\in\states(A)\equiv \CPTP(\mathbb{C}\rightarrow A)$ w.r.t. to some decomposition set $S\subset\supops(\mathbb{C}\rightarrow A)$ is given by
\begin{align}
    \gamma_{S}(\rho) \coloneqq \inf \Big\{ \sum_{i=1}^m \lvert a_i \rvert :  \rho = \sum\limits_{i=1}^m a_i \sigma_i \text{ where } m\geq 1, \sigma_i\in S \textnormal{ and } a_i \in \mathbb{R} \Big\} \, .
\end{align}
If the decomposition set $S$ is convex, one can connect the extent to a corresponding robustness measure.
The robustness of a state $\rho$ with respect to a decomposition set $S$ is given by
\begin{align}
    R_{S}(\rho)\coloneqq \min_{\sigma \in S,t\geq 0} \Big\{ t : \frac{\rho+t \sigma}{1+t} \in S \Big \} \, .
\end{align}

\begin{lemma}\label{lem_robustness}
    Let $S$ be a convex set and $\rho$ a quantum state. Then,
    \begin{equation}\label{gamma_robustness}
        \gamma_{S}(\rho) = 1 + 2R_{S}(\rho)  \, .
    \end{equation}
\end{lemma}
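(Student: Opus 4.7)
The plan is to reduce the claim directly to \Cref{lemma_convex}, then use that all elements of $S$ are quantum states (hence trace one), which couples the two coefficients in a two-element QPD and collapses the optimization to the one defining $R_S$.

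First I would invoke \Cref{lemma_convex}, which applies because $S$ is convex, to write
\begin{equation*}
    \gamma_S(\rho) = \inf\bigl\{a^+ + a^- : \rho = a^+ \sigma^+ - a^- \sigma^-,\; a^\pm \geq 0,\; \sigma^\pm \in S\bigr\}\, .
\end{equation*}
Since $S \subset \states(A)$ consists of density operators, every $\sigma^\pm$ has trace one. Taking the trace of the decomposition $\rho = a^+ \sigma^+ - a^- \sigma^-$ and using $\tr[\rho]=1$ gives the linear constraint $a^+ - a^- = 1$. Substituting $a^+ = 1 + a^-$ and renaming $t \coloneqq a^-$ and $\sigma \coloneqq \sigma^-$, the decomposition becomes $\rho + t\sigma = (1+t)\sigma^+$, equivalently $(\rho + t\sigma)/(1+t) = \sigma^+ \in S$. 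The objective $a^+ + a^- = 1 + 2t$ is strictly increasing in $t$, so
\begin{equation*}
    \gamma_S(\rho) = 1 + 2\,\inf\Bigl\{t \geq 0 : \exists\, \sigma \in S \text{ with } \tfrac{\rho + t\sigma}{1+t} \in S\Bigr\} = 1 + 2R_S(\rho) \, ,
\end{equation*}
which is the claim.

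The only subtlety worth flagging is making sure the infimum in $\gamma_S$ is actually attained over decompositions whose components are states and not merely positive multiples of states (so that the trace-normalization argument applies). This is immediate from the statement of \Cref{lemma_convex}, whose elements $\cE^\pm$ are required to lie in $S$ itself. A secondary bookkeeping point is the boundary case $\rho \in S$: here the choice $t=0$, $\sigma^+ = \rho$ is feasible, giving $R_S(\rho) = 0$ and $\gamma_S(\rho) = 1$, consistent with both sides; no separate argument is needed. I do not anticipate any real obstacle — the result is essentially a change of variables in the two-term QPD enabled by trace normalization.
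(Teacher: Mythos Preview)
Your proof is correct and follows the same approach the paper sketches: invoke \Cref{lemma_convex} to reduce to two-term decompositions, use the trace-one constraint on elements of $S$ to force $a^+ - a^- = 1$, and change variables to identify the robustness. The paper's own proof is only a one-line sketch (``any optimal decomposition of one quantity induces a valid decomposition for the other'') referring elsewhere for details, so your version is in fact more explicit than what appears in the paper.
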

\begin{proof}
    Based on \cref{lemma_convex}, it is easy to see that any optimal decomposition of one quantity induces a valid decomposition for the other. We refer to~\cite[Lemma~3.17]{christophe_phd} for a more detailed proof that also holds for channels.
\end{proof}

\section{Quasiprobability simulation with classical side information} \label{sec_intermediate_meas}
It might seem natural to restrict ourselves to considering decomposition sets $S$ that are subsets of the completely positive trace-preserving (CPTP) maps, since these are precisely the physically realizable operations.
In this section, we will see that it is useful to allow $S$ to also contain certain trace non-increasing maps and even non-completely positive maps.
This should be understood as a mathematical trick, that allows us to capture the notion of \emph{classical side information} in the quasiprobability simulation procedure.
As explained in \Cref{sec:qps_technique}, the procedure operates by randomly replacing non-free operations by free ones, and then appropriately multiplying the output of the circuit by some real scalar.
This technique can be augmented by allowing this post-processing step to additionally multiply the output by some scalar which may depend on mid-circuits measurements.
Let us first consider a few explanatory examples before presenting the trick in full generality.

As an example, consider that we want to simulate the one-qubit trace non-increasing operation $\Pi_0(\cdot)=\proj{0}(\cdot)\proj{0}$ which maps $\rho\mapsto \braket{0|\rho|0}\proj{0}$.
This map cannot be written as a linear combination of CPTP maps, because any such linear combination would be proportional to a trace-preserving map.
Hence, $\Pi_0$ cannot be simulated if $S$ only contains CPTP maps (i.e., if no classical side information is utilized).
Yet still, it can be simulated by performing a measurement in the computational basis described by
\begin{equation}
  \mathcal{M}(\rho) \coloneqq \sum_{i=0}^1 \proj{i} \rho \proj{i} \otimes \proj{i}_X
\end{equation}
which stores the outcome in a classical register $X$.
Notice that one can write $\Pi_0(\rho) = (\id\otimes \bra{0}_X) \mathcal{M}(\rho)  (\id\otimes \ket{0}_X)$.
If we now insert this in the term for the expectation value of some observable $O$ on some circuit with initial state $\rho_{\mathrm{in}}$ and operations $\mathcal{U}_2\circ \Pi_0\circ \mathcal{U}_1$ (here $\mathcal{U}_i$ stand for all other operations that occur before and after in the circuit), then the $\bra{0}_X$ and $\ket{0}_X$ can be absorbed into the observable, i.e.
\begin{equation}\label{eq:side_info_example}
  \tr\left[O \cdot \mathcal{U}_2\circ\Pi_0\circ\mathcal{U}_1 (\rho)\right]
  = 
  \tr\left[(O\otimes\proj{0}_X) \mathcal{U}_2\circ\mathcal{M}\circ\mathcal{U}_1 (\rho)\right]
  \, .
\end{equation}
Hence, the expectation value can be estimated as follows: We replace the $\Pi_0$ operation in the circuit by the computational basis measurement $\cM$.
For every circuit execution, we weight the measurement output either by $1$ or $0$, depending on the mid-circuit measurement outcome.
Note that this is not post-selection, as the ``discarded'' runs are still taken into account when averaging over multiple runs, they just have value zero.
\Cref{eq:side_info_example} guarantees that this procedure constitutes an unbiased estimator.

It is illustrative to consider a second example, this time of the non-completely positive map $\cE=\Pi_0-\Pi_1$ where we analogously define $\Pi_1(\cdot)=\proj{1}(\cdot)\proj{1}$.
Here again, if we only care about the final observable, we can simulate this $\cE$ by replacing it with a computational basis measurement
\begin{equation}
  \tr\left[O \cdot \mathcal{U}_2\circ(\Pi_0-\Pi_1)\circ\mathcal{U}_1 (\rho)\right]
  = 
  \tr\left[(O\otimes(\proj{0}_X-\proj{1}_X)) \cdot \mathcal{U}_2\circ\mathcal{M}\circ\mathcal{U}_1 (\rho)\right]
  \, 
\end{equation}
and by weighting the final circuit observable outcome by either $+1$ or $-1$ depending on the intermediate measurement outcome.

The main insights of the previous two examples is that \emph{for the purpose of estimating some expectation value} one can use classical side information to simulate non-CPTP maps by weighting the final observable measurement by some factor depending on the intermediate measurement outcome.
The most general operation that produces such classical information is described by a quantum instrument.
Given a discrete quantum instrument $(\cG_j)_j$ that our computer can physically perform, this technique allows us to effectively run any operation $\sum_j b_j \cG_j$ for some weighting factors $b_j\in\mathbb{R}$. 
This is useful for quasiprobabilty simulation, because when we now decompose our target operation $\cE=\sum_i a_i\cF_i$, the $\cF_i$ do not need to be CPTP, but instead they could rather be of this more general form.
This might possibly allow us to achieve a smaller extent.
Note that in order for~\Cref{prop_qpd_overhead} to still hold, we cannot weight the outcome by a factor $b_j$ larger than $1$ in absolute value, as that would increase the variance of the estimators.
For this reason, we have to restrict ourselves coefficients fulfilling $\abs{b_j}\leq 1$.

In summary, when characterizing the capabilities of a restricted quantum computer to perform quasiprobability simulation with classical side information, we should not consider the set of free CPTP maps that the computer can perform.
Instead, it is more insightful to consider the set of free quantum instruments that the computer can realize.
Given a quantum computer that can realize some set of quantum instruments $Q$, the associated decomposition set (which captures the capability of using side information) is given as follows.
\begin{definition}
  Let $Q\subset\qi(I\rightarrow
  O)$ be a set of discrete quantum instruments.
  We define the set
  \begin{align}
    \cL[Q] \coloneqq \Big\{ \sum_i a_i \cF_i :  (\cF_i)_i\in Q, a_i\in [-1,1] \Big\} \, .
  \end{align}
\end{definition}
By our above discussion, the optimal simulation overhead (with classical side information) for an operation $\cE$ is thus characterized by $\gamma_{\cL[Q]}(\cE)$.
For the rest of this section, we provide some mathematical properties of $\cL[Q]$.
\begin{definition}\label{def_q_prop}
  Let $Q\subset\qi(I\rightarrow
  O)$ be a set of quantum instruments.
  We say that
  \begin{enumerate}[(a)]
    \item $Q$ is \emph{coarsegrainable} if $\forall (\cE_i)_i\in Q$, any coarse graining of $(\cE_i)_i$ is also in $Q$.
    \item $Q$ is \emph{trivially finegrainable} if $\forall (\cE_i)_i\in Q$ we have $((1-p_i)\cE_i)_i\cup (p_i\cE_i)_i \in Q$ for any choice of probabilities $p_i\in [0,1]$.
    \item $Q$ is \emph{closed under mixture} if for any two instruments $(\cE_i)_i$ and $(\cF_j)_j$ in $Q$, their convex mixture $(p\cE_i)_i\cup ((1-p)\cF_j)_j$ is also in $Q$, for any $p\in [0,1]$.
  \end{enumerate}
\end{definition}
It is quite natural for the set of free quantum instruments to fulfill these three properties, and they hold in most physically relevant settings.
$Q$ being coarsegrainable means that one can discard classical information and treat different classical outcomes as the same.
Trivial finegranability essentially means that one can sample a random bit with the probability depending on the observed measurement outcome.
Similarly, when $Q$ is closed under mixture, then randomly choosing between two free quantum instruments is itself also a free quantum instrument in $Q$.

At first glance, using the decomposition set $\cL[Q]$ seems to add a lot of complexity, as now every element of the QPD could consist of a possibly infinite sum of quantum instrument elements.
However, this turns out not to be the case.
Indeed, it is generally sufficient to consider two-element instruments and coefficients $a_i=\pm 1$ when the set $Q$ fulfills the aforementioned natural assumptions.
\begin{lemma}\label{lem_decomp_set_characterization}
  If $Q\subset\qi(A)$ is coarsegrainable and trivially finegranable, then 
  \begin{equation}
    \cL[Q] = \{ \cE^+ - \cE^- : (\cE^+,\cE^-) \in Q\} \, .
  \end{equation}
\end{lemma}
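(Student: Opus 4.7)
The reverse inclusion $\supseteq$ will be immediate: given $(\cE^+, \cE^-) \in Q$, the difference $\cE^+ - \cE^-$ is already of the form $\sum_i a_i \cF_i$ with $a_i = \pm 1 \in [-1,1]$, so it lies in $\cL[Q]$ by definition. No use of coarsegrainability or finegranability is needed here.

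The real content is the forward inclusion $\subseteq$. The plan is to start with an arbitrary element $\sum_i a_i \cF_i \in \cL[Q]$, where $(\cF_i)_i \in Q$ and $a_i \in [-1,1]$, and produce a two-outcome instrument $(\cE^+, \cE^-) \in Q$ satisfying $\cE^+ - \cE^- = \sum_i a_i \cF_i$. The pivotal observation is that the affine map $a_i \mapsto p_i \coloneqq (1+a_i)/2$ sends $[-1,1]$ bijectively onto $[0,1]$, converting the signed coefficients of the QPD into probabilities suitable for trivial finegrainability.

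First, I would apply trivial finegrainability to $(\cF_i)_i$ with this choice of $p_i$, obtaining the refined instrument $\bigl((1-p_i)\cF_i\bigr)_i \cup \bigl(p_i \cF_i\bigr)_i \in Q$, whose elements are $\tfrac{1-a_i}{2}\cF_i$ and $\tfrac{1+a_i}{2}\cF_i$ respectively. Next, I would use coarsegrainability via the map that sends every ``$p_i$-labelled'' element to the outcome $+$ and every ``$(1-p_i)$-labelled'' element to the outcome $-$. This produces the two-outcome instrument
\begin{equation*}
    \cE^+ \coloneqq \sum_i \tfrac{1+a_i}{2}\,\cF_i, \qquad \cE^- \coloneqq \sum_i \tfrac{1-a_i}{2}\,\cF_i,
\end{equation*}
and a direct computation gives $\cE^+ + \cE^- = \sum_i \cF_i \in \CPTP$ (confirming it is a genuine instrument, since $(\cF_i)_i$ was) together with $\cE^+ - \cE^- = \sum_i a_i \cF_i$, as desired.

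I expect the only subtle point to be the degenerate case when some $a_i \in \{-1, +1\}$, since then one of the scaled copies $\tfrac{1 \mp a_i}{2}\cF_i$ vanishes and would, strictly speaking, violate the ``nonzero'' requirement in the definition of $\qi(A)$. I would address this by adopting the natural convention that zero elements produced by finegrainability are simply dropped from the instrument, which is consistent with the interpretation of trivial finegrainability as sampling an auxiliary bit. Apart from this minor bookkeeping issue, the argument is a one-step finegraining followed by a one-step coarse graining; the main conceptual ingredient is identifying the right reparametrization $p_i = (1+a_i)/2$.
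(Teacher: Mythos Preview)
Your proof is correct, and in fact streamlines the paper's argument. Both proofs arrive at exactly the same two-element instrument
\[
\cE^+ = \sum_i \tfrac{1+a_i}{2}\cF_i,\qquad \cE^- = \sum_i \tfrac{1-a_i}{2}\cF_i,
\]
but by different routes. The paper finegrains with $p_i = |a_i|$, which produces a \emph{three}-element instrument $(\tilde{\cE}^+,\tilde{\cE}^-,\cE^0)$ where $\cE^0 = \sum_i (1-|a_i|)\cF_i$ collects the ``leftover'' mass; it then needs a second finegrain/coarsegrain pass to split $\cE^0$ in half and absorb it into $\tilde{\cE}^\pm$. Your choice $p_i = (1+a_i)/2$ distributes the leftover symmetrically from the outset, so a single finegrain followed by a single coarsegrain suffices. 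The gain is purely in economy: one application of each property instead of two. The degenerate-case bookkeeping you flag (zero elements when $a_i=\pm1$) is handled identically in the paper, which also relies implicitly on the convention that zero branches produced by finegrainability are dropped.
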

\begin{proof}
  Consider an element $\cE = \sum_i a_i \cE_i \in \cL[Q]$ where $(\cE_i)_i\in Q$.
  By the trivially finegrainable property of $Q$, we can also express $\cE$ as a weighted sum of another instrument in $Q$, but only with coefficients $+1,-1$ and $0$:
  \begin{equation}
    \cE = \sum_i \sgn(a_i) \left(\abs{a_i} \cE_i\right) + \sum_i 0\cdot \left((1-\abs{a_i}) \cE_i\right)
  \end{equation}
  We now coarsegrain this instrument into a three-element instrument $(\cE^+,\cE^-,\cE^0)$
  \begin{align}
    \Tilde{\cE}^+ \coloneqq \sum_{i : a_i \geq 0} \abs{a_i}\cE_i\, , \qquad
    \Tilde{\cE}^- \coloneqq \sum_{i : a_i < 0} \abs{a_i}\cE_i \, , \qquad
    \cE^0 \coloneqq \sum_{i} (1-\abs{a_i})\cE_i \, ,
  \end{align}
  which is still in $Q$ by the coarsegrainability property.
  We thus have
  \begin{equation}
    \cE = \Tilde{\cE}^+ - \Tilde{\cE}^- + 0\cdot \cE^0 \, .
  \end{equation}
  Finally, the finegrainability and coarsegrainability properties imply $(\Tilde{\cE}^+,\Tilde{\cE}^-,\frac{1}{2}\cE^0,\frac{1}{2}\cE^0)$ and thus $(\Tilde{\cE}^++\frac{1}{2}\cE^0,\Tilde{\cE}^-+\frac{1}{2}\cE^0)$ are quantum instruments in $Q$.
  The desired statement thus follows from
  \begin{equation}
    \cE = (\Tilde{\cE}^++\frac{1}{2}\cE^0) - (\Tilde{\cE}^-+\frac{1}{2}\cE^0) \, .
  \end{equation}
\end{proof}

We also note that any set $Q$ being closed under mixture provides useful structure for the induced decomposition set.
\begin{lemma}\label{lem_ds_convex}
    Let $Q\subset\qi(I\rightarrow
  O)$ be a set of quantum instruments that is closed under mixture. Then $\cL[Q]$ is convex.
\end{lemma}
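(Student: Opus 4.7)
The plan is to take two arbitrary elements of $\cL[Q]$ and an arbitrary convex weight, then exhibit their convex combination explicitly as an element of $\cL[Q]$. Concretely, given $\cE, \cE' \in \cL[Q]$, write them as $\cE = \sum_i a_i \cF_i$ and $\cE' = \sum_j b_j \cG_j$ for some instruments $(\cF_i)_i, (\cG_j)_j \in Q$ and coefficients $a_i, b_j \in [-1,1]$. For any $p \in [0,1]$, I want to show $p\cE + (1-p)\cE' \in \cL[Q]$.

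The key trick is to invoke closure under mixture: the family $(p\cF_i)_i \cup ((1-p)\cG_j)_j$ is itself an instrument in $Q$. Then
\begin{equation}
p\cE + (1-p)\cE' = \sum_i a_i \bigl(p\cF_i\bigr) + \sum_j b_j \bigl((1-p)\cG_j\bigr),
\end{equation}
which expresses $p\cE + (1-p)\cE'$ as a linear combination of the elements of this instrument with coefficients $a_i$ and $b_j$. Since $|a_i|, |b_j| \leq 1$ by assumption, this is a valid element of $\cL[Q]$.

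The only subtlety, rather than a real obstacle, is recognizing that the convex weights $p$ and $1-p$ should be absorbed into the instrument elements themselves (where the ``closed under mixture'' property naturally places them), rather than multiplied onto the QPD coefficients. This keeps the QPD coefficients within $[-1,1]$ and yields the desired membership in $\cL[Q]$.
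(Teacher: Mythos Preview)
Your proof is correct and essentially identical to the paper's own argument: both write the two elements of $\cL[Q]$ in terms of instruments in $Q$, invoke closure under mixture to obtain the combined instrument $(p\cF_i)_i\cup((1-p)\cG_j)_j\in Q$, and observe that the original coefficients $a_i,b_j\in[-1,1]$ serve as the QPD coefficients for the convex combination.
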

\begin{proof}
    Let $\cE$ and $\cF$ be two elements in $\cL[Q]$.
    By definition, we can write them as $\cE=\sum_i b_i\cE_i$ and $\cF=\sum_j c_j\cF_j$ for two quantum instruments $(\cE_i)_i$, $(\cF_j)_j$ in $Q$.
    Any convex mixture of the two elements is also in $\cL[Q]$, because we can write
    \begin{equation}
        (1-p)\cE + p\cF = \sum_i b_i (1-p)\cE_i + \sum_j c_j p\cF_j \, ,
    \end{equation}
    where $p\in [0,1]$ and $(p\cE_i)_i\cup ((1-p)\cF_j)_j$ is in $Q$ due to it being closed under mixture.
\end{proof}
In fact, if a set of quantum instruments $Q$ is closed under mixture, then the induced set of quantum channels is also convex.
\begin{lemma}\label{lem_s_star_convex}
    Let $Q\subset\qi(I\rightarrow O)$ be a set of quantum instruments and $S^{\star}\coloneqq \cL[Q]$.
    We denote by $S\coloneqq \{\sum_i\cE_i | (\cE_i)_i\in Q\}$ the set of quantum channels induced by $Q$.
    If $S^{\star}$ is convex, then so is $S$.
\end{lemma}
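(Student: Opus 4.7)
The plan is to show that for any $\cE_1, \cE_2 \in S$ and $p \in [0,1]$, the convex combination $p\cE_1 + (1-p)\cE_2$ lies in $S$, by leveraging the inclusion $S \subseteq S^{\star}$ together with the trace-preserving condition. I would first observe that every $\cE = \sum_i \cE_i \in S$ automatically lies in $S^{\star}$ via the decomposition with all coefficients equal to $+1$. In particular $\cE_1, \cE_2 \in S^{\star}$, so by convexity of $S^{\star}$ we get $p\cE_1 + (1-p)\cE_2 \in S^{\star}$. The definition of $\cL[Q]$ then provides an instrument $(\cG_k)_k \in Q$ and scalars $c_k \in [-1,1]$ with
\begin{equation*}
    p\cE_1 + (1-p)\cE_2 = \sum_k c_k \cG_k \, .
\end{equation*}

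The key step is then to show that, because the left-hand side is trace-preserving while $(\cG_k)_k$ itself sums to a CPTP map, the coefficients $c_k$ must all be equal to $+1$. Applying $\tr$ to both sides yields $\sum_k c_k \tr[\cG_k(\rho)] = \tr[\rho]$ for every state $\rho$, while the instrument condition gives $\sum_k \tr[\cG_k(\rho)] = \tr[\rho]$. Subtracting these two identities gives $\sum_k (1-c_k)\tr[\cG_k(\rho)] = 0$. Since $1-c_k \geq 0$ and each $\cG_k$ is completely positive, each summand is a product of two non-negative numbers, hence each must vanish individually. As each $\cG_k$ is a \emph{nonzero} CP map (this is part of the definition of a quantum instrument in the excerpt), there exists some state $\rho$ with $\tr[\cG_k(\rho)] > 0$, which forces $c_k = 1$. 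Therefore $p\cE_1 + (1-p)\cE_2 = \sum_k \cG_k \in S$.

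The main subtlety is the realization that the apparent freedom in the decomposition of $p\cE_1 + (1-p)\cE_2$ inside $S^{\star}$ is actually illusory once we impose trace preservation: any genuine use of a coefficient $c_k < 1$ would demand that $\cG_k$ contribute zero trace on every input, contradicting the nonzero CP requirement built into the definition of quantum instruments. Beyond this observation, the argument is elementary and notably does not invoke any of the additional structural assumptions (coarsegrainability, trivial finegrainability, or closure under mixture) from~\Cref{def_q_prop}.
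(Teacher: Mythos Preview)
Your proof is correct and follows essentially the same approach as the paper: both argue that the convex combination lies in $S^{\star}$, write it as $\sum_k c_k\cG_k$ for some instrument $(\cG_k)_k\in Q$, and then use trace preservation of both the left-hand side and $\sum_k\cG_k$ together with positivity to force $c_k=1$ for all $k$. The only cosmetic difference is that the paper phrases the trace-preservation constraint via the partial trace of the Choi operator (obtaining an operator inequality $\sum_j b_j\tr_O[\cJ(\cG_j)]\leq\sum_j\tr_O[\cJ(\cG_j)]$ that must be an equality), whereas you apply the trace directly to states.
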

\begin{proof}
  Consider two channels $\cE,\cF\in S$.
  We know that any convex mixture $(1-p)\cE +p\cF$ lies in $S^{\star}$, and we now need to show that it also lies in $S$.
  By the definition of $S^{\star}$, we can write
  \begin{equation}
    (1-p)\cE + p\cF = \sum_j b_j \cG_j
  \end{equation}
  for some $b_j\in [-1,1]$ and $(\cG_j)_j\in Q$.
  Taking the Choi isomorphism on both sides, as well as the partial trace $\tr_O$ thereof, we obtain
  \begin{equation}
    \frac{1}{\dim(I)}\id = \sum_j b_j \tr_O[\cJ(\cG_i)] \, .
  \end{equation}
  At the same time, we know that $\sum_j\cG_j\in\CPTP(I\rightarrow O)$, so
  \begin{equation}
    \frac{1}{\dim(I)}\id = \sum_j b_j \tr_O[\cJ(\cG_j)] \leq \sum_j \tr_O[\cJ(\cG_j)] = \frac{1}{\dim(I)}\id \, .
  \end{equation}
  This implies that for all $j$ we have $b_j=1$, unless $\cG_j$ is itself zero.
  Therefore, $(1-p)\cE + p\cF\in S$.
\end{proof}

\section{Quasiprobabilistic circuit cutting}\label{sec:circuit_cutting}
In this section, we elaborate how the quasiprobability technique can be used to simulate a large quantum computer with a small one.
Suppose one wants to run a quantum circuit on a given quantum device, but its number of available qubits is too small to fit the circuit.
The premise of \emph{circuit cutting} is to subdivide the circuit into smaller partitions which are each small enough to fit on the available quantum computer.
The goal is to reconstruct the outcome of the original circuit by executing the smaller partitions on a small quantum computer and performing appropriate classical post-processing.

It is natural to realize circuit cutting through quasiprobabilty simulation.
Consider for instance the setup depicted in \Cref{fig:cutting_example}.
The qubits in a circuit are grouped into two systems $A$ and $B$.
Every gate that acts nonlocally across $A$ and $B$ is considered a non-free operation, and we will simulate them using operations that only act locally on each partition.

\begin{figure}[!htb]
    \centering
        \begin{tikzpicture}[thick, scale=1]
    \def\xs{0.07}
    \def \b{0.3}
    \def \x{8}
    \def \s{0.02}

     \draw [fill=gray!15,draw=none] (-0.05,0.75+2.5*\xs) rectangle (5.4,0+0.25*\xs);   
     \draw [fill=cyan!15,draw=none] (-0.05,-0.25*\xs) rectangle (5.4,-0.75-2.5*\xs);       
    
    \draw (-0.1,0.45) -- (5,0.45);   
    \draw (-0.1,0.75) -- (5,0.75);  
    \draw (-0.1,0.15) -- (5,0.15);
    \draw (-0.1,-0.15) -- (5,-0.15);
    \draw (-0.1,-0.45) -- (5,-0.45);   
    \draw (-0.1,-0.75) -- (5,-0.75);   
    \draw [fill=yellow!40,draw=black] (1,0.15+\xs) rectangle (1+\b,-0.15-\xs); 
    \draw [fill=yellow!40,draw=black] (3,0.15+\xs) rectangle (3+\b,-0.45-\xs); 
    \draw [fill=yellow!40,draw=black] (4,0.75+\xs) rectangle (4+\b,-0.15-\xs); 
    \node at (1+\b/2+0.01,0) {\footnotesize{$U_{\!1}$}};
    \node at (3+\b/2+0.01,-0.15) {\footnotesize{$U_{\!2}$}};
    \node at (4+\b/2+0.01,0.3) {\footnotesize{$U_{\!3}$}};    
    
    \draw [fill=white,draw=black] (0.5,0.75+\xs) rectangle (0.5+\b,0.45-\xs);
    \draw [fill=white,draw=black] (0.0,0.15+\xs) rectangle (0.0+\b,0.15-\xs);   
    \draw [fill=white,draw=black] (0.0,-0.15+\xs) rectangle (0.0+\b,-0.45-\xs);
    \draw [fill=white,draw=black] (0.5,-0.45+\xs) rectangle (0.5+\b,-0.75-\xs);
    
    \draw [fill=white,draw=black] (1,0.75+\xs) rectangle (1+\b,0.75-\xs);
    \draw [fill=white,draw=black] (1.5,0.45+\xs) rectangle (1.5+\b,0.15-\xs);  
    \draw [fill=white,draw=black] (1.5,-0.15+\xs) rectangle (1.5+\b,-0.45-\xs);
    \draw [fill=white,draw=black] (2,-0.75+\xs) rectangle (2+\b,-0.75-\xs);
    \draw [fill=white,draw=black] (2,0.75+\xs) rectangle (2+\b,0.15-\xs);

    \draw [fill=white,draw=black] (2.5,-0.45+\xs) rectangle (2.5+\b,-0.75-\xs);  
    
    \draw [fill=white,draw=black] (3,0.75+\xs) rectangle (3+\b,0.45-\xs); 
    
    \draw [fill=white,draw=black] (3.5,0.45+\xs) rectangle (3.5+\b,0.15-\xs);
    \draw [fill=white,draw=black] (3.5,-0.45+\xs) rectangle (3.5+\b,-0.75-\xs);

    \draw [fill=white,draw=black] (4.5,-0.15+\xs) rectangle (4.5+\b,-0.45-\xs);
    \draw [fill=white,draw=black] (4.5,0.75+\xs) rectangle (4.5+\b,0.45-\xs);  
    
    \draw [fill=white,draw=black] (5,0.15+1.6*\xs) rectangle (5+\b,0.15-1.6*\xs);
    \draw [fill=white,draw=black] (5,0.45+1.6*\xs) rectangle (5+\b,0.45-1.6*\xs); 
    \draw [fill=white,draw=black] (5,0.75+1.6*\xs) rectangle (5+\b,0.75-1.6*\xs); 
    \draw [fill=white,draw=black] (5,-0.15+1.6*\xs) rectangle (5+\b,-0.15-1.6*\xs);
    \draw [fill=white,draw=black] (5,-0.45+1.6*\xs) rectangle (5+\b,-0.45-1.6*\xs); 
    \draw [fill=white,draw=black] (5,-0.75+1.6*\xs) rectangle (5+\b,-0.75-1.6*\xs);               
     \draw[thin] (5+\b-0.05,0.15-0.07) arc (0:180:0.1);
     \draw[thin,->] (5+\b-0.05-0.11,0.15-0.07) -- (5+\b-0.05,0.15-0.07+0.15); 
     \draw[thin] (5+\b-0.05,0.45-0.07) arc (0:180:0.1);
     \draw[thin,->] (5+\b-0.05-0.11,0.45-0.07) -- (5+\b-0.05,0.45-0.07+0.15);   
     \draw[thin] (5+\b-0.05,0.75-0.07) arc (0:180:0.1);
     \draw[thin,->] (5+\b-0.05-0.11,0.75-0.07) -- (5+\b-0.05,0.75-0.07+0.15);     
     \draw[thin] (5+\b,0.15+\s) -- (5+\b+7*\s,0.15+\s);
     \draw[thin] (5+\b,0.15-\s) -- (5+\b+7*\s,0.15-\s);     
     \draw[thin] (5+\b,0.45+\s) -- (5+\b+7*\s,0.45+\s);
     \draw[thin] (5+\b,0.45-\s) -- (5+\b+7*\s,0.45-\s);   
     \draw[thin] (5+\b,0.75+\s) -- (5+\b+7*\s,0.75+\s);
     \draw[thin] (5+\b,0.75-\s) -- (5+\b+7*\s,0.75-\s);   

     \draw[thin] (5+\b-0.05,-0.15-0.07) arc (0:180:0.1);
     \draw[thin,->] (5+\b-0.05-0.11,-0.15-0.07) -- (5+\b-0.05,-0.15-0.07+0.15); 
     \draw[thin] (5+\b-0.05,-0.45-0.07) arc (0:180:0.1);
     \draw[thin,->] (5+\b-0.05-0.11,-0.45-0.07) -- (5+\b-0.05,-0.45-0.07+0.15);   
     \draw[thin] (5+\b-0.05,-0.75-0.07) arc (0:180:0.1);
     \draw[thin,->] (5+\b-0.05-0.11,-0.75-0.07) -- (5+\b-0.05,-0.75-0.07+0.15);     
     \draw[thin] (5+\b,-0.15+\s) -- (5+\b+7*\s,-0.15+\s);
     \draw[thin] (5+\b,-0.15-\s) -- (5+\b+7*\s,-0.15-\s);     
     \draw[thin] (5+\b,-0.45+\s) -- (5+\b+7*\s,-0.45+\s);
     \draw[thin] (5+\b,-0.45-\s) -- (5+\b+7*\s,-0.45-\s);   
     \draw[thin] (5+\b,-0.75+\s) -- (5+\b+7*\s,-0.75+\s);
     \draw[thin] (5+\b,-0.75-\s) -- (5+\b+7*\s,-0.75-\s);       
                          
\node at (5/2+\b/2-0.2/2+\x/2,0) {$\implies$};
     \draw [fill=gray!15,draw=none] (-0.05+\x,0.75+2.5*\xs) rectangle (5.4+\x,0+0.25*\xs);   
     \draw [fill=cyan!15,draw=none] (-0.05+\x,-0.25*\xs) rectangle (5.4+\x,-0.75-2.5*\xs);       
    
    \draw (-0.1+\x,0.45) -- (5+\x,0.45);   
    \draw (-0.1+\x,0.75) -- (5+\x,0.75);  
    \draw (-0.1+\x,0.15) -- (5+\x,0.15);
    \draw (-0.1+\x,-0.15) -- (5+\x,-0.15);
    \draw (-0.1+\x,-0.45) -- (5+\x,-0.45);   
    \draw (-0.1+\x,-0.75) -- (5+\x,-0.75);   
    \draw [fill=yellow!40,draw=black] (1+\x,0.15+\xs) rectangle (1+\x+\b,0.15-\xs); 
    \draw [fill=yellow!40,draw=black] (1+\x,-0.15+\xs) rectangle (1+\x+\b,-0.15-\xs);     
    \draw [fill=yellow!40,draw=black] (3+\x,0.15+\xs) rectangle (3+\x+\b,0.15-\xs); 
    \draw [fill=yellow!40,draw=black] (3+\x,-0.15+\xs) rectangle (3+\x+\b,-0.45-\xs);     
    \draw [fill=yellow!40,draw=black] (4+\x,0.75+\xs) rectangle (4+\x+\b,0.15-\xs); 
    \draw [fill=yellow!40,draw=black] (4+\x,-0.15+\xs) rectangle (4+\x+\b,-0.15-\xs);     
    
    \draw [fill=white,draw=black] (0.5+\x,0.75+\xs) rectangle (0.5+\b+\x,0.45-\xs);
    \draw [fill=white,draw=black] (0.0+\x,0.15+\xs) rectangle (0.0+\b+\x,0.15-\xs);   
    \draw [fill=white,draw=black] (0.0+\x,-0.15+\xs) rectangle (0.0+\b+\x,-0.45-\xs);
    \draw [fill=white,draw=black] (0.5+\x,-0.45+\xs) rectangle (0.5+\b+\x,-0.75-\xs);
    
    \draw [fill=white,draw=black] (1+\x,0.75+\xs) rectangle (1+\b+\x,0.75-\xs);
    \draw [fill=white,draw=black] (1.5+\x,0.45+\xs) rectangle (1.5+\b+\x,0.15-\xs);  
    \draw [fill=white,draw=black] (1.5+\x,-0.15+\xs) rectangle (1.5+\b+\x,-0.45-\xs);
    \draw [fill=white,draw=black] (2+\x,-0.75+\xs) rectangle (2+\b+\x,-0.75-\xs);
    \draw [fill=white,draw=black] (2+\x,0.75+\xs) rectangle (2+\b+\x,0.15-\xs);

    \draw [fill=white,draw=black] (2.5+\x,-0.45+\xs) rectangle (2.5+\b+\x,-0.75-\xs);  
    
    \draw [fill=white,draw=black] (3+\x,0.75+\xs) rectangle (3+\b+\x,0.45-\xs); 
    
    \draw [fill=white,draw=black] (3.5+\x,0.45+\xs) rectangle (3.5+\b+\x,0.15-\xs);
    \draw [fill=white,draw=black] (3.5+\x,-0.45+\xs) rectangle (3.5+\b+\x,-0.75-\xs);

    \draw [fill=white,draw=black] (4.5+\x,-0.15+\xs) rectangle (4.5+\b+\x,-0.45-\xs);
    \draw [fill=white,draw=black] (4.5+\x,0.75+\xs) rectangle (4.5+\b+\x,0.45-\xs);  
    
    \draw [fill=white,draw=black] (5+\x,0.15+1.6*\xs) rectangle (\x+5+\b,0.15-1.6*\xs);
    \draw [fill=white,draw=black] (\x+5,0.45+1.6*\xs) rectangle (\x+5+\b,0.45-1.6*\xs); 
    \draw [fill=white,draw=black] (\x+5,0.75+1.6*\xs) rectangle (\x+5+\b,0.75-1.6*\xs); 
    \draw [fill=white,draw=black] (\x+5,-0.15+1.6*\xs) rectangle (\x+5+\b,-0.15-1.6*\xs);
    \draw [fill=white,draw=black] (\x+5,-0.45+1.6*\xs) rectangle (\x+5+\b,-0.45-1.6*\xs); 
    \draw [fill=white,draw=black] (\x+5,-0.75+1.6*\xs) rectangle (\x+5+\b,-0.75-1.6*\xs);               
     \draw[thin] (\x+5+\b-0.05,0.15-0.07) arc (0:180:0.1);
     \draw[thin,->] (\x+5+\b-0.05-0.11,0.15-0.07) -- (\x+5+\b-0.05,0.15-0.07+0.15); 
     \draw[thin] (\x+5+\b-0.05,0.45-0.07) arc (0:180:0.1);
     \draw[thin,->] (\x+5+\b-0.05-0.11,0.45-0.07) -- (\x+5+\b-0.05,0.45-0.07+0.15);   
     \draw[thin] (\x+5+\b-0.05,0.75-0.07) arc (0:180:0.1);
     \draw[thin,->] (\x+5+\b-0.05-0.11,0.75-0.07) -- (\x+5+\b-0.05,0.75-0.07+0.15);     
     \draw[thin] (\x+5+\b,0.15+\s) -- (\x+5+\b+7*\s,0.15+\s);
     \draw[thin] (\x+5+\b,0.15-\s) -- (\x+5+\b+7*\s,0.15-\s);     
     \draw[thin] (\x+5+\b,0.45+\s) -- (\x+5+\b+7*\s,0.45+\s);
     \draw[thin] (\x+5+\b,0.45-\s) -- (\x+5+\b+7*\s,0.45-\s);   
     \draw[thin] (\x+5+\b,0.75+\s) -- (\x+5+\b+7*\s,0.75+\s);
     \draw[thin] (\x+5+\b,0.75-\s) -- (\x+5+\b+7*\s,0.75-\s);   

     \draw[thin] (\x+5+\b-0.05,-0.15-0.07) arc (0:180:0.1);
     \draw[thin,->] (\x+5+\b-0.05-0.11,-0.15-0.07) -- (\x+5+\b-0.05,-0.15-0.07+0.15); 
     \draw[thin] (\x+5+\b-0.05,-0.45-0.07) arc (0:180:0.1);
     \draw[thin,->] (\x+5+\b-0.05-0.11,-0.45-0.07) -- (\x+5+\b-0.05,-0.45-0.07+0.15);   
     \draw[thin] (\x+5+\b-0.05,-0.75-0.07) arc (0:180:0.1);
     \draw[thin,->] (\x+5+\b-0.05-0.11,-0.75-0.07) -- (\x+5+\b-0.05,-0.75-0.07+0.15);     
     \draw[thin] (\x+5+\b,-0.15+\s) -- (\x+5+\b+7*\s,-0.15+\s);
     \draw[thin] (\x+5+\b,-0.15-\s) -- (\x+5+\b+7*\s,-0.15-\s);     
     \draw[thin] (\x+5+\b,-0.45+\s) -- (\x+5+\b+7*\s,-0.45+\s);
     \draw[thin] (\x+5+\b,-0.45-\s) -- (\x+5+\b+7*\s,-0.45-\s);   
     \draw[thin] (\x+5+\b,-0.75+\s) -- (\x+5+\b+7*\s,-0.75+\s);
     \draw[thin] (\x+5+\b,-0.75-\s) -- (\x+5+\b+7*\s,-0.75-\s);

     \draw [fill=gray!15,draw=none] (0.1,1+0.2) rectangle (0.1+0.25,1+0.25+0.2);
     \draw [fill=cyan!15,draw=none] (1.6,1+0.2) rectangle (1.6+0.25,1+0.25+0.2);     
     \node[gray] at (0.55,1.325) {\footnotesize{$A$}};
     \node[cyan] at (2.05,1.325) {\footnotesize{$B$}};
   
    \end{tikzpicture}
    \caption{The nonlocal circuit on the left can be simulated with local circuits on the right using quasiprobability simulation. Each non-local gate $U_1,U_2$ and $U_3$ is simulated using local operations.}
    \label{fig:cutting_example}
\end{figure}
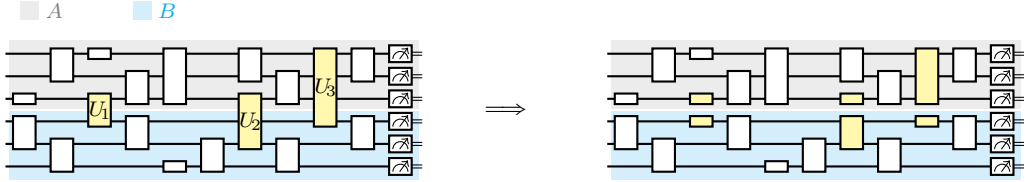

It is common to distinguish between two settings where classical communication between the subpartitions is either allowed or not allowed.
Mathematically, the distinction between these two situations is captured by choosing the decomposition set $S$ to be $\LO$ (``local operations'') or $\LOCC$ (``local operations with classical communication''), respectively, $\LO^{\star}$ and $\LOCC^{\star}$ when classical side information is also allowed.
A precise mathematical definition of these four sets will be given further below.

Classical communication could potentially lead to a lower simulation overhead for certain bipartite channels $\cE$. In~\cref{app:remaining_inequalities}, we show that there exist bipartite channels $\cE$ such that $\gamma_{\LOCC^{\star}}(\cE) < \gamma_{\LO^{\star}}(\cE)$. 
It is known that $\gamma_{\LOCC^{\star}}(\cU) = \gamma_{\LO^{\star}}(\cU)$ for Clifford gates~\cite{piv23}, all two-qubit unitary channels $\cU$ and KAK-like generalizations thereof~\cite{SPS24}.
It is unknown if this equality is true for arbitrary unitaries.

The use of classical communication comes at an additional price in terms of practical implementation costs. Without classical communication, the individual subcircuits could be run sequentially and in principle on the same quantum device.
However, once classical communication is involved, multiple small quantum processors will generally be required to execute the subcircuits simultaneously.

The set $\LOCC$ is notoriously hard to work with due to its intricate mathematical structure.
A common strategy to ease this difficulty is to replace $\LOCC$ with a relaxation of separable channels $\SEP$ or positive partial transpose channels $\PPT$, which are strict supersets of $\LOCC$ with simpler characterization.
A precise mathematical definition will follow later, but we already point out that
\begin{equation}\label{eq:chan_hierarchy}
    \LO \subset \LOCC \subset \SEP \subset \PPT
\end{equation}
and
\begin{equation}\label{eq:chanstar_hierarchy}
    \LO^{\star} \subset \LOCC^{\star} \subset \SEP^{\star} \subset \PPT^{\star} \, ,
\end{equation}
which in turn implies the relations between the associated quasiprobability extents depicted in~\cref{fig_simple}.\footnote{When considering the extent of some bipartite channel $\cE\in\cptp(AB\rightarrow A'B')$, we will typically drop the systems $AB$ in the subscript (e.g. $\gamma_{\LOCC}(\cE)$ instead of $\gamma_{\LOCC(AB\rightarrow A'B')}(\cE)$) to simplify the notation. The implied systems will be clear from context.}
We highlight that compared to $\LO$ and $\LOCC$, $\SEP$ and $\PPT$ have no operational meaning for the purpose of circuit cutting and instead only serve as a mathematical tool to find lower bounds.
\begin{figure}[!htb]
    \centering
    \begin{tikzpicture}
\def \x{2};
\def \y{1.3};
 \node at (0,0) {$\gamma_{\mathrm{LO}^{\star}}$}; 
 \node at (0,-\y) {$\gamma_{\mathrm{LO}}$}; 
 \node at (\x,0) {$\gamma_{\mathrm{LOCC}^{\star}}$}; 
 \node at (\x,-\y) {$\gamma_{\mathrm{LOCC}}$}; 
  \node at (2*\x,0) {$\gamma_{\mathrm{SEP}^{\star}}$}; 
 \node at (2*\x,-\y) {$\gamma_{\mathrm{SEP}}$}; 
 \node at (3*\x,0) {$\gamma_{\mathrm{PPT}^{\star}}$}; 
 \node at (3*\x,-\y) {$\gamma_{\mathrm{PPT}}$};  


 \node[rotate=90] at (0,-0.5*\y) {$\geq$};
  \node[rotate=90] at (\x,-0.5*\y) {$\geq$};
    \node[rotate=90] at (2*\x,-0.5*\y) {$\geq$};
  \node[rotate=90] at (3*\x,-0.5*\y) {$\geq$};
  \node at (0.5*\x,0) {$\geq$}; 
  \node at (1.5*\x,-\y) {$\geq$}; 
  \node at (0.5*\x,-\y) {$\geq$}; 
  \node at (1.5*\x,0) {$\geq$};
  \node at (2.5*\x,0) {$\geq$};  
  \node at (2.5*\x,-\y) {$\geq$};


\end{tikzpicture}
\caption{Trivial relations of extents for different settings. The inequalities follow from the inclusion relations of the different decomposition sets (see \Cref{eq:chan_hierarchy,eq:chanstar_hierarchy}). A main result of this work is a more refined version of this diagram, which is depicted in \Cref{fig_results}.}
\label{fig_simple}
\end{figure}

We now introduce the quantum instruments of these four settings in detail:
\begin{align}
    \qi_{\LO(AB\rightarrow A'B')} &\coloneqq \big \{ (\cE_i\otimes\cF_j)_{i,j} : (\cE_i)_i\in\qi(A\rightarrow A'), (\cF_j)_j\in\qi(B\rightarrow B') \big \} \\
    \qi_{\SEP(AB\rightarrow A'B')} &\coloneqq  { \big \{ (\cE_i)_i\in \qi(A B\rightarrow A'B') : \forall i: \frac{\cJ(\cE_i)}{\tr[\cJ(\cE_i)]} \in \SEP(AA',BB')  \big\} }\\
    \qi_{\PPT(AB\rightarrow A'B')} &\coloneqq \big \{ (\cE_i)_i\in \qi(A B \rightarrow A'B') : \forall i: \cJ(\cE_i)^{T_B}\geq 0 \big \} \, .
\end{align}
The precise definition of the $\LOCC$ instruments $\qi_{\LOCC}$ is technical and not necessary for our purposes here. We refer the reader to~\cite{CLMOW14} for a precise mathematical treatment of $\qi_{\LOCC}$.
All these instruments induce a set of achievable channels, i.e., 
\footnotesize
\begin{align}
     \LO(AB\rightarrow A'B') & \coloneqq \Big \{ \sum_i\cE_i : (\cE_i)_i\in\qi_{\LO(AB\rightarrow A'B')} \Big \}  = \big \{ \cE\otimes\cF : \cE\in\CPTP(A \rightarrow A'), \cF\in\CPTP(B \rightarrow B') \big \} \\
     \LOCC(AB\rightarrow A'B') & \coloneqq \Big \{ \sum_i\cE_i : (\cE_i)_i\in\qi_{\LOCC(AB\rightarrow A'B')} \Big \} \\
     \SEP(AB\rightarrow A'B') & \coloneqq \Big \{ \sum_i\cE_i \!:\! (\cE_i)_i\in\qi_{\SEP(AB\rightarrow A'B')} \Big \} \!=\! \big \{ \cE\in\CPTP(AB\rightarrow A'B') \!:\! \cJ(\cE)\in\SEP(AA',BB') \} \big \} \\
     \PPT(AB\rightarrow A'B') & \coloneqq \Big \{ \sum_i\cE_i \!:\! (\cE_i)_i\!\in\!\qi_{\PPT(AB\rightarrow A'B')} \Big \}  \!=\! \big \{ \cE\!\in\!\CPTP(AB\rightarrow A'B') \!:\! \cJ(\cE)\in\PPT(AA',BB') \} \big \} \, .
\end{align}
\normalsize
Following the procedure from~\Cref{sec_intermediate_meas}, the corresponding decomposition sets capturing the notion of classical side information are given by
\begin{align}
     \LO^{\star}(AB\rightarrow A'B') & \coloneqq \cL[\qi_{\LO(AB\rightarrow A'B')}] \\
     \LOCC^{\star}(AB\rightarrow A'B') & \coloneqq \cL[\qi_{\LOCC(AB\rightarrow A'B')}] = \{ \cE^+-\cE^- : (\cE^+,\cE^-)\in\qi_{\LOCC(AB\rightarrow A'B')}\} \\
     \SEP^{\star}(AB\rightarrow A'B') & \coloneqq \cL[\qi_{\SEP(AB\rightarrow A'B')}] = \{ \cE^+-\cE^- : (\cE^+,\cE^-)\in\qi_{\SEP(AB\rightarrow A'B')}\}  \\
     \PPT^{\star}(AB\rightarrow A'B') & \coloneqq \cL[\qi_{\PPT(AB\rightarrow A'B')}] = \{ \cE^+-\cE^- : (\cE^+,\cE^-)\in\qi_{\PPT(AB\rightarrow A'B')}\} \, ,
\end{align}
where we used~\Cref{lem_decomp_set_characterization} for the characterization of $\LOCC^{\star}, \SEP^{\star}$ and $\PPT^{\star}$. 
Indeed, it follows immediately from the definition of $\qi_{\SEP}$ and $\qi_{\PPT}$ that  they are coarsegrainable and trivially finegrainable.
For $\qi_{\LOCC}$, the coarsegrainability follows directly from its definition and trivial finegrainability essentially can be seen by concatenating any $\LOCC$ protocol with one of the two parties choosing a random bit with probability based on all previously obtained measurement outcomes.
Furthermore, $\LOCC^{\star}, \SEP^{\star} $ and $\PPT^{\star}$ are convex by~\Cref{lem_ds_convex} because the corresponding sets of quantum instruments are closed under mixture.\footnote{For $\LOCC$ one can use one round of classical communication to randomly choose between one of two protocols to perform.}
These properties do unfortunately not hold for $\qi_{\LO}$, making its characterization slightly more challenging -- this will not be a problem for our purposes in this paper.

The four proposed settings constitute a hierarchy in the sense that
\begin{equation}
  \qi_{\LO} \subset \qi_{\LOCC} \subset \qi_{\SEP} \subset \qi_{\PPT} \, ,
\end{equation}
which therefore implies \Cref{eq:chan_hierarchy,eq:chanstar_hierarchy} and the relations in \Cref{fig_simple}.
One of the main technical results of this work is to refine these relations with those seen in ~\Cref{fig_results}.

We highlight that the two quantities with practical relevance for circuit cutting are $\gamma_{\LO^{\star}}$ and $\gamma_{\LOCC^{\star}}$.
The quantities $\gamma_{\LO}$ and $\gamma_{\LOCC}$ are not as natural, as there is little reason not to use the power classical side information, especially in the $\LOCC$ setting where intermediate measurements are required in the $\LOCC$ protocol itself.
We still study these two quantities to understand the improvement that classical side information can enable.
The quantities $\gamma_{\PPT},\gamma_{\SEP},\gamma_{\PPT^{\star}}$ and $\gamma_{\SEP^{\star}}$ have no physical meaning and will be used purely as a mathematical tool to provide lower bounds for the other quantities.

\section{Utility of classical side information} \label{sec_QI}
In this section, we study whether the usage of classical side information allows for a smaller sampling overhead.
Before we consider general channels, let us consider the simpler case of states.
Here, classical side information does not provide any utility, and in fact, we get the same extent in the $\LOCC,\LO$ and $\SEP$ settings.
\begin{lemma} \label{lem_extent_states}
    For any bipartite density matrix $\rho \in \states(AB)$ we have
    \begin{equation}
        \gamma_{\LO}(\rho) 
        = \gamma_{\LO^{\star}}(\rho) 
        = \gamma_{\LOCC}(\rho) 
        = \gamma_{\LOCC^{\star}}(\rho) 
        = \gamma_{\SEP}(\rho) 
        = \gamma_{\SEP^{\star}}(\rho) \, .
    \end{equation}
\end{lemma}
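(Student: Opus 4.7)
The plan is to establish the chain $\gamma_{\LO}(\rho) \leq \gamma_{\SEP}(\rho) \leq \gamma_{\SEP^{\star}}(\rho) \leq \gamma_{\LO}(\rho)$, where the final inequality is free from the inclusion $\LO \subseteq \SEP^{\star}$. Combined with the ordering from \Cref{fig_simple}, which already sandwiches all six quantities between $\gamma_{\SEP^{\star}}(\rho)$ and $\gamma_{\LO}(\rho)$, this forces all of them to coincide.

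First, I would prove $\gamma_{\SEP}(\rho) \leq \gamma_{\SEP^{\star}}(\rho)$. Starting from any QPD $\rho = \sum_k a_k \chi_k$ with $\chi_k \in \SEP^{\star}$, I would invoke \Cref{lem_decomp_set_characterization} (applicable since $\qi_{\SEP}$ is coarsegrainable and trivially finegrainable) to write $\chi_k = \sigma_k^+ - \sigma_k^-$ for some two-element separable instrument $(\sigma_k^+, \sigma_k^-)$. With trivial input, the instrument condition says the $\sigma_k^{\pm}$ are sub-normalized separable states satisfying $\tr(\sigma_k^+) + \tr(\sigma_k^-) = 1$. Rescaling each nonzero $\sigma_k^{\pm}$ to a normalized separable state $\hat{\sigma}_k^{\pm}$ and absorbing the subnormalization into the coefficients produces a $\SEP$-QPD of $\rho$ whose $\ell_1$-norm telescopes to $\sum_k |a_k|\bigl(\tr \sigma_k^+ + \tr \sigma_k^-\bigr) = \sum_k |a_k|$.

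Second, I would show $\gamma_{\LO}(\rho) \leq \gamma_{\SEP}(\rho)$. Applying \Cref{lemma_convex} (valid since the set of separable states is convex), I pick a near-optimal two-element QPD $\rho = a^+ \sigma^+ - a^- \sigma^-$ with $\sigma^{\pm}$ separable. Expanding each as a convex mixture of product states $\sigma^{\pm} = \sum_j p_j^{\pm}\, \alpha_j^{\pm} \otimes \beta_j^{\pm}$ and distributing yields a QPD of $\rho$ using only product states, with unchanged $\ell_1$-norm $a^+ + a^-$.

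The main obstacle is the first step: the purpose of \Cref{lem_decomp_set_characterization} is to unlock decompositions that are strictly more powerful than ordinary QPDs in $\SEP$, so it is not obvious a priori that classical side information is useless in the state setting. What makes the argument go through is precisely the trivial-input trace identity $\tr \sigma_k^+ + \tr \sigma_k^- = 1$, which supplies the unit-trace needed to rescale back into $\SEP$ without inflating the $\ell_1$-norm. No analogous normalization exists for instruments with nontrivial input, which is consistent with later results in the paper showing that classical side information becomes genuinely useful once bipartite channels are considered.
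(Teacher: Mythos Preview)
Your proposal is correct and follows essentially the same approach as the paper: both arguments take an arbitrary $\SEP^{\star}$-QPD, invoke \Cref{lem_decomp_set_characterization} to write each element as a difference of sub-normalized separable states (the paper parametrizes this as $(1-p_i)\sigma_i^+ - p_i\sigma_i^-$ with normalized $\sigma_i^{\pm}$, which is equivalent to your rescaling), and then expand each separable state into product states to land in $\LO$ without inflating the $\ell_1$-norm. The only cosmetic difference is that the paper does both reductions in one pass ($\SEP^{\star}\to\LO$ directly), whereas you factor through $\SEP$ as an intermediate stop; the content is the same.
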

Recall that in this context, states should be understood as quantum channels with trivial input space, i.e., $\states(AB)\equiv \cptp(\mathbb{C}\rightarrow AB)$.
Note that the separable quantum channels with trivial input space are precisely equivalent to the set of separable states, i.e., $\SEP(\mathbb{C}\otimes\mathbb{C}\rightarrow AB) \equiv \SEP(A,B)$.
\begin{proof}
    Considering the chain of inequalities in \Cref{fig_simple}, it suffices to show that $\gamma_{\LO}(\rho) \leq \gamma_{\SEP^{\star}}(\rho)$.
    For this purpose, it suffices to show that any QPD of $\rho$ with respect to $\SEP^{\star}$ induces a QPD of $\rho$ with respect to $\LO$ with the identical $\ell_1$-norm of its coefficients.

    For this purpose, note that $\LO(\mathbb{C}\otimes\mathbb{C}\rightarrow AB)$ precisely corresponds to the set of states of the form $\tau\otimes\kappa$ where $\tau\in\states(A),\kappa\in\states(B)$.
    Similarly, for $\gamma_{\SEP^\star}$, we have to consider the set $\SEP^{\star}(\mathbb{C}\otimes\mathbb{C}\rightarrow AB)$. By \Cref{lem_decomp_set_characterization}, it is given by
    \begin{equation}
        \SEP^{\star}(\mathbb{C}\otimes\mathbb{C}\rightarrow AB) \equiv \{ (1-p)\sigma^+ - p\sigma_- : p\in [0,1], \sigma^{\pm}\in \SEP(A,B)\} \, .
    \end{equation}

    Let $\rho=\sum_i a_i\sigma_i$ be a QPD where $\sigma_i\in\SEP^{\star}$, i.e., we can write $\sigma_i = (1-p_i)\sigma_i^+ - p_i\sigma_i^-$ for $p_i\in[0,1],\sigma_i^{\pm}\in\SEP(A,B)$.
    Thanks to separability, we can write $\sigma_i^{\pm} = \sum_j q_j^{i,\pm} \tau_j^{i,\pm}\otimes \kappa_j^{i,\pm}$ where $(q_j^{i,\pm})_j$ is a probability distribution and $\tau_j^{i,\pm}\in\states(A),\kappa_j^{i,\pm}\in\states(B)$.
    We can then rewrite
    \begin{equation}
        \rho = \sum_{i,j} a_i(1-p_i)q_j^{i,+} \tau_j^{i,+}\otimes \kappa_j^{i,+}  -  \sum_{i,j} a_ip_iq_j^{i,-} \tau_j^{i,-}\otimes \kappa_j^{i,-}
    \end{equation}
    which is a QPD with respect to $\LO(\mathbb{C}\otimes\mathbb{C}\rightarrow AB)$ with $\ell_1$-norm of its coefficients given by
    \begin{equation}
        \sum_{i,j} \abs{a_i(1-p_i)q_j^{i,+}} + \abs{a_ip_iq_j^{i,-}} \leq \sum_i \abs{a_i} \, .
    \end{equation}
\end{proof}

The next result states that in the $\LO$ setting, entangling maps cannot be simulated at all without classical side information.
Indeed, the only simulable maps in this setting are non-signalling maps, i.e., maps that do not allow for the transmission of any kind of information.
In contrast, by using classical side information all Hermitian preserving superoperators can be achieved.\footnote{ Note that $\cptp(AB\rightarrow A'B')\subset\HP(AB\rightarrow A'B')$.}
\begin{definition}
    A quantum operation $\cE\in\HP(AB\rightarrow A'B')$ is said to be non-signalling if
    \begin{equation}
        \forall\rho,\sigma \in\states(AB) \quad \textnormal{with} \quad \tr_B[\rho]=\tr_B[\sigma] \quad \textnormal{we have} \quad \tr_B[\cE(\rho)]=\tr_B[\cE(\sigma)] \, .
    \end{equation}
    We denote the set of such non-signalling operation by $\mathrm{NSIG}(AB\rightarrow A'B')$.
\end{definition}

\begin{proposition} \label{prop_LO_vs_LO_star}
    The set of superoperators that exhibit a QPD under $\LO$ and $\LO^{\star}$ are given by $\mathrm{span}_{\mathbb{R}}\big(\LO(AB\rightarrow A'B')\big) = \mathrm{NSIG}(AB\rightarrow A'B')$ and $\mathrm{span}_{\mathbb{R}}\big(\LO^{\star}(AB\rightarrow A'B')\big)=\HP(AB\rightarrow A'B')$, respectively.
\end{proposition}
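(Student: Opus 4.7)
I would prove the two span equalities separately. The inclusions $\mathrm{span}_{\mathbb{R}}(\LO) \subseteq \mathrm{NSIG}$ and $\mathrm{span}_{\mathbb{R}}(\LO^{\star}) \subseteq \HP$ are immediate from linearity: every $\cG \otimes \cH$ with $\cG, \cH$ CPTP satisfies $\tr_{B'} \circ (\cG \otimes \cH) = \cG \otimes \tr_B$ and symmetrically, so it is two-way non-signalling, while every element of $\LO^{\star}$ is a real linear combination of tensor products of CP maps and therefore HP. It remains to establish the reverse inclusions.

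For $\mathrm{NSIG} \subseteq \mathrm{span}_{\mathbb{R}}(\LO)$, my plan is a dimension-matching argument via basis decomposition. Fix Hermitian bases $\{\sigma_\mu\}, \{\tau_\nu\}, \{\pi_\alpha\}, \{\omega_\beta\}$ of the Hermitian operator spaces on $A, A', B, B'$ with $\sigma_0 = \id_A, \tau_0 = \id_{A'}$, etc., and all remaining basis elements traceless. Expanding the Choi matrix of any HP map as $\cJ(\cE) = \sum c_{\mu\alpha\nu\beta}\, \sigma_\mu \otimes \pi_\alpha \otimes \tau_\nu \otimes \omega_\beta$ with $c \in \mathbb{R}$, a short partial-trace computation shows the two non-signalling conditions are equivalent to the vanishing of $c_{\mu\alpha 0 \beta}$ for $\mu \neq 0$ and of $c_{\mu\alpha\nu 0}$ for $\alpha \neq 0$. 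Separately, the Choi matrices of CPTP maps $A \to A'$ satisfy $\tr_{A'} \cJ(\cG) = d_A^{-1} \id_A$, and their real span is exactly the subspace of Hermitian operators on $AA'$ whose $\sigma_\mu \otimes \tau_0$ coefficient vanishes for $\mu \neq 0$. Tensoring this characterization with the analogous one on $BB'$, using that the real span of $\{x \otimes y : x \in X, y \in Y\}$ equals $\mathrm{span}(X) \otimes \mathrm{span}(Y)$, produces exactly the same pattern of vanishing coefficients as the NSIG subspace, so the two subspaces of $\mathrm{Herm}(ABA'B')$ coincide.

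For $\HP \subseteq \mathrm{span}_{\mathbb{R}}(\LO^{\star})$, the argument is cleaner and more structural. First I would show that $\mathrm{span}_{\mathbb{R}}(\LO^{\star})$ equals $\mathrm{span}_{\mathbb{R}}\{\cG \otimes \cH : \cG \in \CP(A \to A'), \cH \in \CP(B \to B')\}$. Indeed, any nonzero $\cG \in \CP$ can be rescaled to a trace-non-increasing map and completed to a two-element quantum instrument (pair it with $\rho \mapsto \tr[(\id - M_\cG)\rho]\, \rho_0$ where $M_\cG$ is the POVM element satisfying $\tr \cG(\rho) = \tr M_\cG \rho$), and analogously for $\cH$, so $\cG \otimes \cH$ is a positive multiple of an element of $\qi_{\LO}$ and hence belongs to $\LO^{\star}$. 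Using again $\mathrm{span}\{x \otimes y\} = \mathrm{span}(X) \otimes \mathrm{span}(Y)$, together with the elementary fact that the real span of $\CP(A \to A')$ equals $\HP(A \to A')$ (any Hermitian Choi matrix splits as a difference of two positive ones), we obtain $\mathrm{span}_{\mathbb{R}}(\LO^{\star}) = \HP(A \to A') \otimes_{\mathbb{R}} \HP(B \to B')$. The natural inclusion $\HP(A \to A') \otimes_{\mathbb{R}} \HP(B \to B') \hookrightarrow \HP(AB \to A'B')$ sending $\cG \otimes \cH$ to itself is injective, and matching real dimensions $d_A^2 d_{A'}^2 d_B^2 d_{B'}^2$ via the Choi isomorphism forces equality.

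The main obstacle is the first equality, where one must set up the basis carefully so that the NSIG partial-trace conditions and the LO trace-preservation conditions produce exactly the same pattern of vanishing basis coefficients. The second equality is essentially linear algebra once one notices that passing to the real span washes out both the trace-preservation and the $|a_{ij}| \leq 1$ constraints, collapsing $\LO^\star$ to bilinear combinations of arbitrary CP maps.
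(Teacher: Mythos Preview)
Your proposal is correct, and for the second equality it is essentially the same as the paper's argument (both reduce to the splitting of a Hermitian Choi operator into positive and negative parts and the observation that any $\CP$ map can be completed to an instrument, hence absorbed into $\LO^\star$).

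For the first equality your route is genuinely different from the paper's. The paper does not compute the NSIG subspace directly: it invokes an external characterization (Gutoski; Cavalcanti et al.) stating that $\mathrm{NSIG}$ equals the real span of tensor products of $\HPTP$ maps, and then shows that every $\HPTP$ map is a real combination of two $\CPTP$ maps via an explicit depolarizing-channel construction $\cE = d_{IO}\lambda\,\cD - (d_{IO}\lambda-1)\big(\tfrac{d_{IO}\lambda}{d_{IO}\lambda-1}\cD - \tfrac{1}{d_{IO}\lambda-1}\cE\big)$. Your approach is instead fully self-contained linear algebra: you fix identity-plus-traceless Hermitian bases on each factor, translate both the two-way non-signalling conditions and the trace-preservation conditions into the same pattern of vanishing Choi coefficients, and then match the two subspaces (the dimension count $\dim U\cdot\dim V = \dim(\mathrm{NSIG})$ indeed works out, which you should state explicitly rather than leave implicit in ``produces exactly the same pattern''). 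What the paper's argument buys is an explicit two-term QPD of any $\HPTP$ map into $\CPTP$ maps, which is reused later in the proof of the subsequent proposition; what your argument buys is independence from the cited NSIG characterization and a cleaner identification of $\mathrm{span}_{\mathbb{R}}(\LO)$ directly as a coefficient subspace.
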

\begin{proof}
    We start by characterizing the span of $\LO(AB)$.
    It can be shown that
    \begin{equation}
        \mathrm{NSIG}(AB\rightarrow A'B')=\left \lbrace \sum_{i,j}\alpha_{ij} \cF_i\otimes \cG_j : \alpha_{ij}\in\mathbb{R},\cF_i\in\HPTP(A\rightarrow A'),\cG_j\in\HPTP(B\rightarrow B') \right \rbrace \, .
    \end{equation}
    For a proof, we refer to~\cite[Theorem 13]{gutoski09} or~\cite[Theorem 5.1]{Cavalcanti_2022}.
    Clearly, \smash{$\mathrm{span}_{\mathbb{R}}\{\LO(AB\rightarrow A'B')\}$} is contained in this set.
    To show the converse, it suffices to argue that any $\cF\otimes \cG$ for $\cF\in\HPTP(A\rightarrow A'),\cG\in\HPTP(B\rightarrow B')$ lies in $\mathrm{span}_{\mathbb{R}}\{\LO(AB\rightarrow A'B')\}$.
    This is a direct consequence of a slightly simpler result: Any $\HPTP$ map $\cE\in\HPTP(I\rightarrow O)$ can be decomposed into a QPD of two $\CPTP$ maps.
    Applying this to $\cF$ and $\cG$ provides the desired statement.
    One such valid QPD of $\cE$ is given by 
    \begin{equation}
        \cE = d_{IO}\lambda \cD -(d_{IO}\lambda -1)\left( \frac{d_{IO}\lambda}{d_{IO}\lambda-1}\cD - \frac{1}{d_{IO}\lambda-1}\cE \right) \, ,
    \end{equation}
    where $\cD(\rho)$ is the fully depolarizing channel from $I$ to $O$ and $d_{IO}\coloneqq\dim(I)\dim(O)$ and $\lambda\in\mathbb{R}$ is at least the maximal eigenvalue of $\cJ(\cE)$.
    Indeed, $( \frac{d_{IO}\lambda}{d_{IO}\lambda-1}\cD - \frac{1}{d_{IO}\lambda-1}\cE)$ can easily be checked to be CPTP as
    \begin{equation}
      \tr_O\left[ \frac{d_{IO}\lambda}{d_{IO}\lambda-1}\cJ(\cD) - \frac{1}{d_{IO}\lambda-1}\cJ(\cE) \right] = \left(\frac{d_{IO}\lambda}{d_{IO}\lambda-1}-\frac{1}{d_{IO}\lambda-1}\right)\frac{1}{d_{I}}\id_{I} = \frac{1}{d_{I}}\id_{I}
    \end{equation}
    and
    \begin{align}
      & \left( \frac{d_{IO}\lambda}{d_{IO}\lambda-1}\cJ(\cD) - \frac{1}{d_{IO}\lambda-1}\cJ(\cE) \right) \geq 0\, ,
    \end{align}
    where the final step is equivalent to $d_{IO}\lambda \frac{1}{d_{IO}}\id_{IO} \geq \cJ(\cE)$.

    Next, we consider the span of $\LO^{\star}(AB)$.
    Clearly, by definition we have
    \begin{equation}
        \mathrm{span}_{\mathbb{R}}(\LO^{\star}(AB))\subset \HP(AB\rightarrow A'B') \, ,
    \end{equation}
    as any real linear combination of Hermitian preserving maps is itself Hermitian preserving, so it suffices to show the converse.
    For this purpose, notice that
    \begin{equation}
        \HP(AB\rightarrow A'B') = \left \lbrace \sum_{i,j}\cF_i\otimes\cG_j : \cF_i\in\HP(A\rightarrow A'), \cG_j\in\HP(B\rightarrow B')\right \rbrace\ \, .
    \end{equation}
    In the following, we will argue that any $\cE\in\HP(I\rightarrow O)$ can be written as a QPD of two elements in $\mathcal{L}[\qi(I\rightarrow O)]$.
    Applying that to each $\cF_i$ and $\cG_j$ implies the desired statement.
    We can split the Choi representation of $\cE$ into a difference of two positive operators $\Lambda^{\pm}$
    \begin{equation}
        \cJ(\cE) = \Lambda^+ - \Lambda^- = \lambda^+\frac{\Lambda^+}{\lambda^+} - \lambda^-\frac{\Lambda^-}{\lambda^-} \, ,
    \end{equation}
    where we denote by $\lambda^{\pm}$ the maximal eigenvalues of $\dim(I)\tr_O[\Lambda^{\pm}]$.
    We can observe that $\Lambda^{\pm}/\lambda^{\pm}$ are Choi operators representing completely positive trace-non-increasing maps, as they are positive and $\tr_O[\Lambda^{\pm}/\lambda^{\pm}] \leq \dim(I)^{-1}\id_I$.
    The desired statement follows immediately from $\CPTN(I\rightarrow O)\subset\cL[\qi(I\rightarrow O)]$.
    Note that we implicitly assumed that $\lambda^{\pm}>0$, but the proof can be straightforwardly adapted for the case where $\cJ(\cE)$ is positive or negative.
\end{proof}
Interestingly, the situation is different for the $\LOCC$, $\SEP$ and $\PPT$ settings.
Indeed, here it is possible to simulate any trace-preserving map without using classical side information.
\begin{proposition}
    The set of superoperators that exhibit a QPD under $\LOCC,\LOCC^{\star},\SEP,\SEP^{\star},\PPT$ and $\PPT^{\star}$ are given by\footnote{Note that here we omit the input systems $A,B$ and output systems $A',B'$ to improve readability. For instance, $\LOCC$ stands for $\LOCC(AB\rightarrow A'B')$.}
    \begin{align}
        \mathrm{span}_{\mathbb{R}}(\LOCC)
        =\mathrm{span}_{\mathbb{R}}(\SEP)
        = \mathrm{span}_{\mathbb{R}}(\PPT) 
        = \mathbb{R}\HPTP \label{eq_eq_1_prop}
    \end{align}
    and
    \begin{align}
        \mathrm{span}_{\mathbb{R}}(\LOCC^{\star})
        = \mathrm{span}_{\mathbb{R}}(\SEP^{\star})
        = \mathrm{span}_{\mathbb{R}}(\PPT^{\star})
        = \HP \,, \label{eq_eq_2_prop}
    \end{align}
    where $\mathbb{R}\HPTP\coloneqq \{r\cdot\cE : r\in\mathbb{R}, \cE\in\HPTP\}$.
\end{proposition}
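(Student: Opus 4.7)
The proof splits into the starred and non-starred chains. In both, the ``easy'' direction is routine: elements of $\PPT \subset \CPTP$ are trace-preserving, so a real combination $\sum_i a_i \cE_i$ satisfies $\tr_{A'B'} \cJ(\sum_i a_i \cE_i) = (\sum_i a_i) \id_{AB}/\dim(AB)$, placing $\mathrm{span}_\mathbb{R}(\PPT) \subset \mathbb{R}\HPTP$; elements of $\PPT^\star$ are differences of $\CPTN$ maps and hence lie in $\HP$, placing $\mathrm{span}_\mathbb{R}(\PPT^\star) \subset \HP$. Together with the chains of set inclusions $\LOCC \subset \SEP \subset \PPT$ and the starred analogue, this takes care of four of the six containments. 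The starred equality~\eqref{eq_eq_2_prop} is then immediate from~\Cref{prop_LO_vs_LO_star}, which already gives $\mathrm{span}_\mathbb{R}(\LO^\star) = \HP$: together with $\LO^\star \subset \LOCC^\star$, all three starred spans are sandwiched between $\HP$ and $\HP$.

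For the non-starred equality~\eqref{eq_eq_1_prop} it remains to prove $\mathbb{R}\HPTP \subset \mathrm{span}_\mathbb{R}(\LOCC)$. I would begin from the depolarizing identity $\cE = d_{IO}\lambda \cD - (d_{IO}\lambda - 1)\cE'$ used in the proof of~\Cref{prop_LO_vs_LO_star}, which writes any HPTP $\cE$ in terms of the fully depolarizing channel $\cD$ and a residual with $\cJ(\cE') = (\lambda \id - \cJ(\cE))/(d_{IO}\lambda - 1)$. Since $\cD = \cD_A \otimes \cD_B$ is LO, this reduces the problem to placing $\cE'$ in the relevant set. For $\PPT$, taking $\lambda \geq \max\{\lambda_{\max}(\cJ(\cE)), \lambda_{\max}(\cJ(\cE)^{T_B})\}$ makes both $\cJ(\cE') \geq 0$ and $\cJ(\cE')^{T_B} \geq 0$, giving $\cE' \in \PPT$. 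For $\SEP$, since $\cJ(\cD)$ is the maximally mixed state on $AA'BB'$ and therefore lies in the interior of the separable states, $\cJ(\cE')$ is separable for $\lambda$ large enough, giving $\cE' \in \SEP$. These two observations already settle $\mathrm{span}_\mathbb{R}(\SEP) = \mathrm{span}_\mathbb{R}(\PPT) = \mathbb{R}\HPTP$.

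The main obstacle is the $\LOCC$ case, for which no analogous neighbourhood argument is available: the ``nonlocality without entanglement'' phenomenon shows that $\SEP \setminus \LOCC$ intersects every neighbourhood of $\cD$, so the above $\cE'$ need not be LOCC. To handle it I would invoke the standard circuit-cutting primitive that the identity wire across the bipartition admits an explicit QPD as a real linear combination of LOCC channels (e.g., ``Alice measures a Pauli, sends the outcome, Bob prepares the corresponding eigenstate''). Iterating this across a Stinespring dilation $\cE = \tr_{E_A E_B} \circ \cU \circ (\cdot \otimes \proj{0}_{E_A} \otimes \proj{0}_{E_B})$ with the environment split locally across the two parties shows that every bipartite unitary $\cU$, and hence (after composing with the local preparation and partial trace, which are LOCC) every $\cE \in \CPTP(AB \to A'B')$, lies in $\mathrm{span}_\mathbb{R}(\LOCC)$. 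Combined with the depolarizing identity this yields $\mathbb{R}\HPTP \subset \mathrm{span}_\mathbb{R}(\LOCC)$, and the chain of inclusions forces all three non-starred spans to equal $\mathbb{R}\HPTP$.
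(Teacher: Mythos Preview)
Your proof is correct and structurally matches the paper's: the starred chain is handled identically via \Cref{prop_LO_vs_LO_star}, and for the non-starred chain both arguments reduce $\HPTP$ to $\CPTP$ via the depolarizing identity, dilate to a bipartite unitary via Stinespring, and then exhibit a QPD of that unitary over $\LOCC$. The only difference is the primitive used in this last step. The paper compiles the dilated unitary into single-qubit gates and CNOTs, realises each nonlocal CNOT by gate teleportation consuming a shared Bell pair, and then quotes the known QPD of a Bell pair into separable states; you instead invoke wire cutting --- the QPD of the cross-partition identity channel into one-way measure-and-prepare $\LOCC$ channels (for a qubit, $\mathrm{id}=\cE_X+\cE_Y+\cE_Z-2\cD$ with $\cE_\alpha$ the $\alpha$-basis dephasing and $\cD$ the depolarizer) --- so that Bob's register can be routed through Alice and the unitary applied locally. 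Both primitives are standard in the circuit-cutting literature and equally sufficient for the finiteness statement being proved. Your separate direct arguments for $\SEP$ and $\PPT$ (tuning $\lambda$ so that $\cJ(\cE')$ is separable, respectively has positive partial transpose) are also correct, and give a pleasant self-contained alternative, though they become redundant once the $\LOCC$ inclusion is established and the chain $\LOCC\subset\SEP\subset\PPT$ is invoked.
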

\begin{proof}
    We start by proving~\cref{eq_eq_2_prop}.
    Clearly, one has
    \begin{equation}
        \mathrm{span}_{\mathbb{R}}(\LOCC^{\star}) 
        \subset \mathrm{span}_{\mathbb{R}}(\SEP^{\star})
        \subset \mathrm{span}_{\mathbb{R}}(\PPT^{\star})
        \subset \HP 
    \end{equation}
    as any (real) linear combination of Hermitian preserving maps it itself Hermitian perserving.
    The converse is a direct consequence from~\Cref{prop_LO_vs_LO_star}, i.e. 
    \begin{equation}
        \HP = \mathrm{span}_{\mathbb{R}}(\LO^{\star}) \subset \mathrm{span}_{\mathbb{R}}(\LOCC^{\star}) \, .
    \end{equation}

    Next, we prove~\cref{eq_eq_1_prop}.
    Again, one has
    \begin{equation}
        \mathrm{span}_{\mathbb{R}}(\LOCC) \subset \mathrm{span}_{\mathbb{R}}(\SEP) \subset \mathrm{span}_{\mathbb{R}}(\PPT) \subset \mathbb{R}\HPTP \, ,
    \end{equation}
    as any (real) linear combination of trace-preserving maps is itself propotional to a trace-preserving map.
    It thus suffices to show $\HPTP\subset \mathrm{span}_{\mathbb{R}}(\LOCC)$, which then implies $\mathbb{R}\HPTP\subset \mathrm{span}_{\mathbb{R}}(\LOCC)$.
    We also know from the proof of \Cref{prop_LO_vs_LO_star} that any $\HPTP$ map can be decomposed as a QPD of CPTP maps.
    Hence, it suffices to show that any $\cE\in\CPTP(AB\rightarrow A'B')$ lies in $\mathrm{span}_{\mathbb{R}}(\LOCC)$.
    We assume without loss of generality that $A=A'$ and $B=B'$ and that both are qubit systems (otherwise, we can always embed these systems into a large enough qubit system).
    Consider the Stinespring dilation of $\cE$
    \begin{equation}
        \cE(\rho) = \tr_E [U_{AEB}(\rho\otimes\proj{0}_E)U_{AEB}^{\dagger}] \, .
    \end{equation}
    It suffices to show that the unitary channel $\cU\in\CPTP(A E B)$ induced by $U_{AEB}$ can be decomposed into channels in $\LOCC(AE B)$ (here, the bipartition is chosen to be across $AE$ and $B$).
    The unitary $U_{AEB}$ can be decomposed into single-qubit gates and CNOT gates~\cite[Section~4.5.3]{nielsenChuang_book}. 
    All CNOT gates that act non-locally between e.g. $AE$ and $B$ can be realized using CNOT gate teleportation~\cite{gate_tel_99}, i.e., using LOCC and consuming a pre-shared entangled Bell pair
    \begin{equation}
        \cU(\rho_{AEB}) = \cF(\rho_{AEB}\otimes\proj{\Psi}_{C_A C_B}) \, ,
    \end{equation}
    where $\cF\in\LOCC(A E  C_AC_B   B)$ (here, the bipartition is across $AEC_A$ and $C_BB$) and $\proj{\Psi}_{C_A C_B}$ is a collection of Bell pairs between the ancilla systems $C_A$ and $C_B$.
    Since there exists a QPD of the Bell pairs
    \begin{equation}
        \proj{\Psi}_{C_AC_B} = a^+\sigma^+ - a^-\sigma^- \, ,
    \end{equation}
    into separable states $\sigma^{\pm}\in\SEP(C_AC_B)$ (see e.g.~\cite{vidal99,piv23}), we can thus decompose $\cU$ into two $\LOCC(AEB)$ maps
    \begin{equation}
        \cU(\rho_{AEB}) = a^+\cF(\rho_{AEB}\otimes\sigma^{+}_{C_AC_B}) - a^-\cF(\rho_{AEB}\otimes\sigma^{-}_{C_AC_B}) \, .
    \end{equation}
\end{proof}

So any map that is trace-preserving (or proportional to a trace-preserving map) can be simulated without classical side information.
Note that this does not mean that the extent with and without side information are equal -- a priori the quasiprobability simulation with classical side information could be strictly more efficient.
We next show that this is not the case in the $\SEP$ and $\PPT$ settings.
\begin{proposition} \label{prop_PPT_PTT_star}
  Let $\cE$ be a $\HPTP$ map. Then, $\gamma_{\PPT}(\cE) = \gamma_{\PPT^\star}(\cE)$ and $\gamma_{\SEP}(\cE) = \gamma_{\SEP^\star}(\cE)$.
\end{proposition}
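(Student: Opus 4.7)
The plan is to establish the nontrivial inequalities $\gamma_{\PPT}(\cE) \leq \gamma_{\PPT^\star}(\cE)$ and $\gamma_{\SEP}(\cE) \leq \gamma_{\SEP^\star}(\cE)$, since the reverse directions follow from $\PPT \subset \PPT^\star$ and $\SEP \subset \SEP^\star$. I will describe the $\PPT$ argument; the $\SEP$ argument is essentially identical because positive combinations of separable-Choi CP maps are again separable-Choi.

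First I would reduce to a two-term decomposition. Since $\qi_\PPT$ is closed under mixture, $\PPT^\star$ is convex by \Cref{lem_ds_convex}, and \Cref{lemma_convex} then states that $\gamma_{\PPT^\star}(\cE)$ is the infimum of $a^+ + a^-$ over decompositions $\cE = a^+ \cE^+ - a^- \cE^-$ with $a^\pm \geq 0$ and $\cE^\pm \in \PPT^\star$. By \Cref{lem_decomp_set_characterization}, I further unfold each $\cE^\pm$ as $\cE^\pm = \cG^\pm_{+} - \cG^\pm_{-}$ with $(\cG^\pm_{+},\cG^\pm_{-}) \in \qi_\PPT$.

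The key step is to regroup the four CP, PPT pieces into two, setting
\[
\cH^+ := a^+ \cG^{+}_{+} + a^- \cG^{-}_{-}, \qquad \cH^- := a^+ \cG^{+}_{-} + a^- \cG^{-}_{+}.
\]
Both $\cH^\pm$ are CP with PPT Choi operators, and $\cE = \cH^+ - \cH^-$. Now I would use the trace-preserving property of $\cE$ together with the fact that $\cG^\pm_{+} + \cG^\pm_{-}$ are each CPTP quantum instrument totals: this gives
\[
\tr_O[\cJ(\cH^+ + \cH^-)] = (a^+ + a^-)\tfrac{\id_I}{d_I}, \qquad \tr_O[\cJ(\cH^+ - \cH^-)] = \tfrac{\id_I}{d_I},
\]
so that $\tr_O[\cJ(\cH^\pm)] = \tfrac{a^+ + a^- \pm 1}{2} \cdot \tfrac{\id_I}{d_I}$. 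Since any QPD of a trace-preserving map has $\ell_1$-norm at least $1$, one has $a^+ + a^- \geq 1$, so both prefactors are nonnegative.

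Finally, rescaling $\tilde\cH^\pm := \tfrac{2}{a^+ + a^- \pm 1} \cH^\pm$ yields elements of $\PPT$ (CPTP with PPT Choi), giving the explicit $\PPT$-decomposition
\[
\cE = \frac{a^+ + a^- + 1}{2}\,\tilde\cH^+ - \frac{a^+ + a^- - 1}{2}\,\tilde\cH^-,
\]
whose $\ell_1$-norm is exactly $a^+ + a^-$. Taking the infimum over decompositions achieving $\gamma_{\PPT^\star}(\cE)$ yields $\gamma_\PPT(\cE) \leq \gamma_{\PPT^\star}(\cE)$. The only mildly delicate point is the degenerate case $a^+ + a^- = 1$, which is handled by observing that then $\cH^-$ is a CP map whose Choi has vanishing partial trace, hence $\cH^- = 0$ and $\cE = \cH^+ \in \PPT$. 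I do not foresee any substantial obstacle; the main content is the balancing identity that turns the \emph{instrument-level} CPTP condition plus the \emph{target-level} TP condition into proportionality to CPTP PPT channels with coefficients that miraculously sum to $a^+ + a^-$.
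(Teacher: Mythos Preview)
Your proposal is correct and follows essentially the same approach as the paper: reduce to a two-term $S^\star$-decomposition via convexity and \Cref{lem_decomp_set_characterization}, regroup the four instrument pieces into $\cH^\pm$, and use the TP constraints to show $\cH^\pm$ are proportional to channels in $S$ with coefficients summing to $a^+ + a^-$. The paper packages this argument into an abstract lemma (\Cref{lemma_s_s_star}) isolating the ``rescaling'' property of $\qi_{\SEP}$ and $\qi_{\PPT}$ that makes the last step go through, whereas you work directly with the Choi characterization; your handling of the degenerate case $a^+ + a^- = 1$ is in fact slightly more explicit than the paper's.
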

\Cref{prop_PPT_PTT_star} is a consequence of the following Lemma. 
\begin{lemma}\label{lemma_s_s_star}
    Let $Q$ be a set of quantum instruments that fulfills the three properties in~\cref{def_q_prop} and $\forall (\cE_i)_i\in Q$ and for each element $\cE_i$ that is proportional to a trace-preserving map, the rescaled map $\frac{1}{\tr[\cE_i]}\cE_i$ is itself a (single-element) quantum instrument in $Q$.
    Let $S\coloneqq \{ \sum_i\cE_i : (\cE_i)_i\in Q\}$ and $S^{\star}\coloneqq\cL[Q]$.
    Then, $\gamma_{S}(\cE)=\gamma_{S^{\star}}(\cE)$ for all HPTP maps $\cE$.
\end{lemma}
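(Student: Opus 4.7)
The plan is to show $\gamma_S(\cE)\leq \gamma_{S^\star}(\cE)$; the reverse inequality is immediate since $S\subset S^\star$. Because $Q$ is closed under mixture, \cref{lem_ds_convex} makes $S^\star$ convex, so \cref{lemma_convex} lets me consider a near-optimal two-element QPD $\cE = a^+\cF^+ - a^-\cF^-$ with $a^\pm\geq 0$, $\cF^\pm \in S^\star$, and $a^++a^-$ arbitrarily close to $\gamma_{S^\star}(\cE)$. Applying \cref{lem_decomp_set_characterization}, I write each $\cF^\pm = \cE^{\pm+} - \cE^{\pm-}$ for a two-element instrument $(\cE^{\pm+},\cE^{\pm-})\in Q$, and expand
\begin{equation}
\cE \;=\; A - B, \qquad A \coloneqq a^+\cE^{++} + a^-\cE^{--}, \qquad B \coloneqq a^+\cE^{+-} + a^-\cE^{-+}.
\end{equation}

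My next step is to fold these two instruments into a single two-outcome instrument in $Q$. Mixing $(\cE^{++},\cE^{+-})$ and $(\cE^{-+},\cE^{--})$ with weight $a^+/(a^++a^-)$ via closure under mixture produces a four-outcome instrument in $Q$; then coarsegraining the outcomes $\{++,--\}$ together and $\{+-,-+\}$ together yields the two-outcome instrument $(\tilde A,\tilde B)\in Q$ with $\tilde A \coloneqq A/(a^++a^-)$ and $\tilde B \coloneqq B/(a^++a^-)$. Now I exploit that $\cE$ is $\HPTP$: on the one hand $\tilde A + \tilde B \in \CPTP$ gives $\tr\circ(\tilde A+\tilde B) = \tr$, while $\tilde A - \tilde B = \cE/(a^++a^-)$ gives $\tr\circ(\tilde A - \tilde B) = \tr/(a^++a^-)$. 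Solving these two relations shows that $\tilde A$ and $\tilde B$ are completely positive and proportional to trace-preserving maps, with constants $t_A = \tfrac{a^++a^-+1}{2(a^++a^-)}$ and $t_B = \tfrac{a^++a^--1}{2(a^++a^-)}$. Note $t_B\geq 0$ because any $\HPTP$ map satisfies $\gamma_{S^\star}(\cE)\geq 1$ (since each element of $S^\star$ has $|\tr\circ\cG|\leq\tr$).

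The extra hypothesis of the lemma then finishes the job: applied to the elements $\tilde A, \tilde B$ of the instrument $(\tilde A,\tilde B)\in Q$, it gives $\tilde A/t_A, \tilde B/t_B \in S$ as single-element instruments. Substituting back,
\begin{equation}
\cE \;=\; (a^++a^-)\,t_A\cdot\tfrac{\tilde A}{t_A} \;-\; (a^++a^-)\,t_B\cdot\tfrac{\tilde B}{t_B}
\end{equation}
is a QPD of $\cE$ over $S$ with $\ell_1$ norm $(a^++a^-)(t_A+t_B) = a^++a^-$; passing to the infimum yields $\gamma_S(\cE)\leq \gamma_{S^\star}(\cE)$. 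The degenerate case $t_B=0$ (equivalently $a^++a^-=1$) is handled separately by noting that then $\tilde B=0$ and $\cE=\tilde A$ already lies in $S$. The main obstacle I anticipate is the fold-in-a-single-instrument step: the raw pieces $\cE^{\pm\pm}$ are generally not proportional to trace-preserving maps, so the extra hypothesis cannot be applied directly to them; it is only after merging the two instruments into $(\tilde A,\tilde B)\in Q$ and invoking the $\HPTP$ property of $\cE$ that the trace structure on $\tilde A,\tilde B$ emerges and the hypothesis can be invoked.
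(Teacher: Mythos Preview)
Your proposal is correct and follows essentially the same route as the paper's proof: reduce to a two-term QPD over $S^\star$, expand each $\cF^\pm$ via \cref{lem_decomp_set_characterization}, regroup the four CP pieces into the two combinations $A,B$, use closure under mixture plus coarsegraining to realize $(\tilde A,\tilde B)$ as a two-outcome instrument in $Q$, exploit the $\HPTP$ property of $\cE$ to see that both outcomes are proportional to trace-preserving maps, and finally invoke the rescaling hypothesis to land in $S$ with unchanged $\ell_1$-norm. Your trace-functional computation of $t_A,t_B$ is equivalent to the paper's Choi-state calculation, and you additionally handle the degenerate case $t_B=0$ and justify $a^++a^-\geq 1$, which the paper leaves implicit.
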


As discussed in the previous section, the quantum instruments of the sets $\PPT$ and $\SEP$ fulfill the three properties in \Cref{def_q_prop}.
Furthermore, an element of any SEP/PPT instrument has a separable/PPT Choi state and can hence be rescaled to a SEP/PPT channel.
Therefore, both $\PPT$ and $\SEP$ fulfill the requirements of~\cref{lemma_s_s_star}.

Note that $\LOCC$ also satisfies the three properties in \Cref{def_q_prop}, but does not fulfill the rescaling property.
In fact, while $\LOCC\subsetneq\SEP$, is known that any channel in $\SEP$ (or in fact any separable instrument) can be \emph{stochastically} realized with LOCC.
More precisely, for any $\cE\in\SEP$ there exists some probability $p>0$ such that $(p\cE,(1-p)\cD)\in\qi_{\LOCC}$ where $\cD$ is the fully depolarizing channel~\cite{D_r_2000,CLMOW14}.

We conclude that the technique from \Cref{lemma_s_s_star} is not applicable to the $\LOCC$ setting.
However, this does not rule out the possibility that other $\LOCC$ constructions could exist which achieve the same extent as the $\LOCC^\star$ construction.

\begin{proof}[Proof of~\cref{lemma_s_s_star}]
The relation $S \subseteq S^\star$ implies $\gamma_{S}(\cE) \geq \gamma_{S^\star}(\cE)$. Hence it suffices to show that given any QPD of $\cE$ with respect to $S^\star$, we can construct a QPD with respect to $S$ with at most the same $\ell_1$-norm of its coefficients.
Since Q is closed under mixture, we know by~\cref{lem_ds_convex} and~\cref{lem_s_star_convex} that both $S$ and $S^\star$ are convex. 
Therefore, by \Cref{lemma_convex} we can assume without loss of generality, that any QPD with respect to $S^{\star}$ has two elements.

Let
\begin{align}
    \cE = a_+ \cF_+ - a_- \cF_- \, .
\end{align}
be a QPD w.r.t. $S^{\star}$.
By~\Cref{lem_decomp_set_characterization}, $\cF_\pm \in S^\star$ can be written as
\begin{equation}
    \cF_\pm = \cG_\pm^+ -\cG_\pm^- \, ,
\end{equation}
for some instruments $(\cG_+^+,\cG_+^-)$ and $(\cG_-^+,\cG_-^-)$ that lie in $Q$.
We can regroup this expression as
\begin{align}
    \cE = a_+ \cF_+ - a_- \cF_- 
    = a_+ (\cG_+^+ - \cG_+^-) - a_- (\cG_-^+ - \cG_-^-)
    = (a_+ \cG_+^+ + a_- \cG_-^-)  - (a_+ \cG_+^-  + a_- \cG_-^+) \, . \label{eq:decomp_ppt}
\end{align}
Now we show that these two grouped terms are proportional to a valid map in $S$.

First, we show that they are proportional to a $\CPTP$ map. Being the sum of quantum instruments, complete positivity is clear, so it remains to show that the map is trace preserving.
For this, we consider the reduced Choi state. By naming the target system of the channels $C$, let us introduce
\begin{align}
    \cM = \tr_C[ \cJ ( \cG_+^+ ) ] \qquad \textnormal{and} \qquad
    \cN = \tr_C[\cJ( \cG_-^-)] \,.
\end{align}
Since the instruments fulfill $\cG_{\pm}^+ + \cG_{\pm}^- \in \CPTP$, we find for the other two instruments
\begin{align}
     \tr_C[\cJ(\cG_+^-)]  = \frac{\id}{d} - \cM \qquad \textnormal{and} \qquad
     \tr_C[\cJ(\cG_-^+)] = \frac{\id}{d} - \cN\, ,
\end{align}
where $d$ is the dimension of the input system and $\frac{\id}{d}$ describes the maximally mixed state.
Applying the partial trace to \cref{eq:decomp_ppt}, we get
\begin{align}
    \frac{\id}{d} &= a_+ \cM + a_- \cN - \big(a_+(\frac{\id}{d} -\cM) + a_- (\frac{\id}{d} - \cN)\big) \, ,
\end{align}
which can be rewritten as
\begin{align} \label{eq_lukas_1}
  \frac{1+a_+ +a_-}{2} \frac{\id}{d} = a_+ \cM + a_- \cN \, . 
\end{align}
Hence, $\cP_+ \coloneqq \frac{2}{1+a_+ +a_-}(a_+ \cG^+_+ + a_- \cG_{-}^{-})$ is  a CPTP map. 
Since Q is trivially finegrainable, the following instrument is an element of Q:
\begin{equation}
    \left(\frac{a_+}{a_++ a_-}\cG^+_+,\frac{a_+}{a_++ a_-}\cG^-_+,\frac{a_-}{a_++ a_-}\cG^+_-,\frac{a_-}{a_++ a_-}\cG^-_-\right) \, .
\end{equation}
Since Q is coarsegrainable, and $\cP_+$ is proportional to $\frac{a_+}{a_++ a_-}\cG^+_+ + \frac{a_-}{a_++ a_-}\cG^-_-$ we can use the rescaling assumption to conclude that $\cP_+$ is in $S$.
Analogously, we find
    \begin{align}
        \tr_C[a_+ \cJ(\cG_+^-)  + a_- \cJ( \cG_-^+)] 
        &= a_+ (\frac{\id}{d} -\cM) + a_-(\frac{\id}{d} -\cN) \\
        &= (a_+ + a_-)\frac{\id}{d} - (a_+ \cM + a_-\cN) \\
        &= (a_+ + a_-)\frac{\id}{d} - \frac{1+a_++a_-}{2} \frac{\id}{d} \\
        &= \frac{a_+ + a_- -1}{2} \frac{\id}{d} \, ,
    \end{align}
where the third equation uses~\cref{eq_lukas_1}.    
Thus, $\cP_- \coloneqq \frac{2}{a_+ +a_- - 1}(a_+ \cG_+^- + a_- \cG_-^+)$ is in $S$ as well.   
With this, we found a decomposition of $\cE$ into maps in $S$ of the form
\begin{align}
         \cE = \frac{1+a_+ +a_-}{2} \cP_+ - \frac{a_+ + a_- -1}{2} \cP_-
\end{align}
and therefore
\begin{align}
\gamma_{S}(\cE) \leq  \frac{1+a_+ +a_-}{2} + \frac{a_+ + a_- -1}{2} = a_+ + a_-  \, .
\end{align}    

\end{proof}

\section{PPT lower bound}\label{sec_ppt}
\Cref{prop_PPT_PTT_star} implies that the quantity $\gamma_{\PPT}$, which was introduced in~\cite{xin24}, does not only bound $\gamma_{\LO}$ and $\gamma_{\LOCC}$ from below, but also the physically more relevant quantities $\gamma_{\LO^{\star}}$ and $\gamma_{\LOCC^{\star}}$, respectively.
For any $\cE\in\TP(A B)$ we have
\begin{align} \label{eq_PPT_LB}
\gamma_{\LO^\star}(\cE) \geq \gamma_{\LOCC^\star}(\cE) \geq \gamma_{\PPT^\star}(\cE)  \overset{\textnormal{\Cshref{prop_PPT_PTT_star}}}{=} \gamma_{\PPT}(\cE) \, .
\end{align}
This bound is interesting for two reasons:
\begin{enumerate}[(i)]
\item It can be evaluated efficiently via semidefinite programming~\cite[Equation~B2]{xin24}. Let $\cJ(\cE)_{AA'BB'}$ denote the Choi state of $\cE\in \HPTP(AB\rightarrow A'B')$, then 
    \begin{align}  \label{eq_PPT_SDP}
      \gamma_{\PPT}(\cE)\! =\! \min \limits_{c\geq 0, X_{AA'BB'}}\big\{ 2c \!-\!1 : X^{\mathrm{T}_{BB'}} \!\geq\! 0 , X^{\mathrm{T}_{BB'}} \geq \cJ(\cE)^{\mathrm{T}_{BB'}}, X_{AB} \!=\! c\, \id_{AB}, X \!\geq\! \cJ(\cE) \big \} \, ,
    \end{align}
    which is a semidefinite program (SDP).
\item As we will see in the below, the lower bound from~\cref{eq_PPT_LB} is tight for many instances of $\cE$.
\end{enumerate}

The remainder of this section discusses the $\PPT$ lower bound from~\cref{eq_PPT_LB} and its consequences in more detail.
First, we show that for the special case of pure states, the $\PPT$ lower bound is tight (recall that for states, $\gamma_{\LOCC}=\gamma_{\LO}=\gamma_{\SEP}$).
\begin{proposition}\label{prop_ppt_pure_states}
    For any pure bipartite state $\ket{\psi}_{AB}$ one has $\gamma_{\SEP}(\proj{\psi})=\gamma_{\PPT}(\proj{\psi})$.
\end{proposition}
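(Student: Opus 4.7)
The plan is to show the nontrivial direction $\gamma_{\PPT}(\proj{\psi}) \geq \gamma_{\SEP}(\proj{\psi})$ via a simple trace-norm bound based on the partial transpose; the reverse direction $\gamma_{\PPT}(\proj{\psi}) \leq \gamma_{\SEP}(\proj{\psi})$ is immediate because $\SEP(A,B) \subset \PPT(A,B)$.

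For the main step, I would begin with the Schmidt decomposition $\ket{\psi} = \sum_i \sqrt{\lambda_i}\,\ket{i}_A\ket{i}_B$ and compute the trace norm of the partial transpose:
\begin{equation*}
  \proj{\psi}^{T_B} = \sum_i \lambda_i \proj{ii} + \sum_{i<j} \sqrt{\lambda_i\lambda_j}\bigl(\ket{ij}\bra{ji} + \ket{ji}\bra{ij}\bigr),
\end{equation*}
whose eigenvalues have absolute values $\{\lambda_i\}_i$ together with $\{\sqrt{\lambda_i\lambda_j}\}$ (each with multiplicity $2$ for $i<j$). Summing these gives $\|\proj{\psi}^{T_B}\|_1 = \sum_i\lambda_i + 2\sum_{i<j}\sqrt{\lambda_i\lambda_j} = \bigl(\sum_i\sqrt{\lambda_i}\bigr)^2$.

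Next I would use that $\SEP(A,B)$ and $\PPT(A,B)$ are convex, so by \Cref{lemma_convex} every QPD of $\proj{\psi}$ with respect to $\PPT(A,B)$ can be assumed to be of the form $\proj{\psi} = a^+\rho^+ - a^-\rho^-$ with $a^\pm \geq 0$ and $\rho^\pm \in \PPT(A,B)$. Applying partial transposition on $B$ and using the triangle inequality gives
\begin{equation*}
  \bigl(\textstyle\sum_i\sqrt{\lambda_i}\bigr)^2 \;=\; \|\proj{\psi}^{T_B}\|_1 \;\leq\; a^+ \|(\rho^+)^{T_B}\|_1 + a^- \|(\rho^-)^{T_B}\|_1 \;=\; a^+ + a^-,
\end{equation*}
since $(\rho^\pm)^{T_B} \geq 0$ with unit trace by the $\PPT$ assumption. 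Taking the infimum over all valid QPDs yields $\gamma_{\PPT}(\proj{\psi}) \geq (\sum_i\sqrt{\lambda_i})^2$.

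Finally I would invoke Vidal's result~\cite{vidal99} (also recorded in~\cite{piv23}) that $\gamma_{\SEP}(\proj{\psi}) = (\sum_i\sqrt{\lambda_i})^2$ for any pure bipartite state, which combined with the trivial inequality $\gamma_{\PPT}(\proj{\psi}) \leq \gamma_{\SEP}(\proj{\psi})$ closes the sandwich and gives equality. There is no real obstacle here: the only potentially subtle piece is the trace-norm computation of the partial transpose, but this is standard and follows immediately from block-diagonalizing in the Schmidt basis.
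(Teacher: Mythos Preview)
Your argument is correct, and it is genuinely different from---and shorter than---the paper's proof. You lower-bound $\gamma_{\PPT}$ directly via the trace norm of the partial transpose: since $\|\sigma^{T_B}\|_1=1$ for every $\sigma\in\PPT(A,B)$, the triangle inequality on any two-term $\PPT$ QPD immediately gives $\gamma_{\PPT}(\proj{\psi})\geq\|\proj{\psi}^{T_B}\|_1=(\sum_i\sqrt{\lambda_i})^2$, which matches Vidal's value for $\gamma_{\SEP}$. The paper instead works through the robustness formulation $R_{\PPT}$: assuming $(\proj{\psi}+t\sigma)/(1+t)\in\PPT$, it tests positivity against the antisymmetric Bell projectors $P_{ij}=\proj{\phi_{ij}}$ in the Schmidt basis, obtains $\tr[P_{ij}\sigma^{T_B}]\geq a_ia_j/t$, sums over $i>j$, and bounds $\tr[(\sum_{i>j}P_{ij})\sigma^{T_B}]\leq\tfrac12$ by computing $\lambda_{\max}((\sum_{i>j}P_{ij})^{T_B})$. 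This mirrors Vidal's original robustness argument and yields the same bound $t\geq R_{\SEP}(\ket{\psy})$ only after several explicit spectral computations. Your route is more economical because it packages all of those projector inequalities into the single global invariant $\|\cdot^{T_B}\|_1$; the paper's route, on the other hand, makes the mechanism visible at the level of individual Schmidt pairs and stays closer to the robustness language used elsewhere in the manuscript.
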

\begin{proof}
   Since $\SEP \subseteq \PPT$ we have $\gamma_{\SEP}(\proj{\psi})\geq \gamma_{\PPT}(\proj{\psi})$. Hence, it remains to show the converse.
    Let's introduce a quantity analogous to the robustness of entanglement, but relative to $\PPT$ instead of $\SEP$, i.e.,
    \begin{align}
        R_{\PPT}(\rho_{AB})\coloneqq \min_{\sigma_{AB} \in \PPT(AB),t\geq 0} \Big \{ t : \frac{\rho_{AB}+t \sigma_{AB}}{1+t} \in \mathrm{PPT}\Big \} \, .
    \end{align}
    This quantity trivially fulfills $R_{\PPT}(\ket{\psi}) \leq R_{\SEP}(\ket{\psi})$. In the following, we show the opposite direction, which implies the proposition by \Cref{lem_robustness}. 
    To prove the desired statement, we must show that there exists no $\sigma \in \St$, such that $t<R_{\SEP}(\ket{\psi})$ and
    \begin{equation}
        \frac{1}{1+t}\left( \ket{\psi}\bra{\psi}_{AB} + t \sigma_{AB} \right) \in \PPT \, . \label{eq:PPTLemma}
    \end{equation}
    We follow the proof of \cite{vidal99} and extend it from separable to $\PPT$ states.
    Given a $\sigma$ that fulfills~\cref{eq:PPTLemma}, we show that $t$ has to be larger or equal to $R_{\SEP}(\ket{\psi})$.  Using the property of $\PPT$ states, one obtains for any $\ket{\phi}$
    \begin{align}
        \bra{\phi} \frac{1}{1+t}\left( \ket{\psi}\bra{\psi}_{AB} + t \sigma_{AB} \right)^{T_B} \ket{\phi} \geq 0 \qquad \textnormal{and} \qquad
        \bra{\phi} \left( \ket{\psi}\bra{\psi}_{AB}^{T_B} + t \sigma_{AB}^{T_B} \right) \ket{\phi} \geq 0 \, , \label{eq:PPT_pos}
    \end{align}
    In the following, we want to work in the Schmidt basis $\ket{\phi} = \sum_i a_i \ket{i}_A\ket{i}_B$.
    Using this basis, we now define Bell states
    \begin{align}
        \ket{\phi_{ij}^\perp} = \frac{1}{\sqrt{2}}\left( \ket{ij} + \ket{ji} \right) \qquad \textnormal{and} \qquad
        \ket{\phi_{ij}} = \frac{1}{\sqrt{2}}\left( \ket{ij} - \ket{ji} \right) 
    \end{align}
    and corresponding projectors
    \begin{align}
        P_{ij}^+ = \ket{\phi_{ij}^\perp}\bra{\phi_{ij}^\perp} \qquad \textnormal{and} \qquad
        P_{ij} = \ket{\phi_{ij}}\bra{\phi_{ij}} \, .
    \end{align}
    This allows us to write
    \begin{align}
        \ket{\psi}\bra{\psi}^{T_B} &= \left(\sum_i a_i \ket{ii} \sum_j a_j \bra{jj} \right)^{T_B} \\
        &= \left( \sum_i a_i^2 \ket{ii}\bra{ii} + \sum_{i\neq j}     a_i a_j \ket{ii}\bra{jj} \right)^{T_B} \\
        &= \sum_i a_i^2 \ket{ii}\bra{ii} + \sum_{i\neq j}     a_i a_j \ket{ij}\bra{ji} \\
        &= \sum_i a_i^2 \ket{ii}\bra{ii} + \sum_{i> j}     a_i a_j (\ket{ij}\bra{ji} + \ket{ji}\bra{ij}) \\
        &= \sum_i a_i^2 \ket{ii}\bra{ii} + \sum_{i> j}     a_i a_j ( P_{ij}^+ - P_{ij}) \, .
    \end{align}
    Since~\cref{eq:PPT_pos} holds for any $\ket{\phi}$, we can pick $\ket{\phi} = \ket{\phi_{ij}}$. This leads to
    \begin{equation}
        \bra{\phi_{ij}} \left( -a_i a_j P_{ij} + t \sigma^{T_B} \right) \ket{\phi_{ij}} \geq 0
    \end{equation}
    or equivalently
    \begin{equation}\label{eq:ppt-projector}
        \tr[P_{ij} \sigma^{T_B}] \geq \frac{a_i a_j}{t} \, .
    \end{equation}
    From \cite{vidal99}, we know that the robustness of entanglement can be written as
    \begin{align} \label{eq_vidal}
        R_{\SEP}(\ket{\psi}) &= \Big(\sum_i a_i\Big)^2 -1 
        = \sum_{ij} a_i a_j - 1 
        = \sum_{ij} a_i a_j - \sum_i a_i^2 
        = 2\sum_{i>j} a_i a_j \, .
    \end{align}
    By summing over \cref{eq:ppt-projector}, we then find
    \begin{equation}
        \tr\Big[ \sum_{i>j} P_{ij} \sigma^{T_B} \Big] \geq \frac{R_{\SEP}(\ket{\psi})}{2t} \, .
    \end{equation}
    Thus, if we can show that $\tr[ \sum_{i>j} P_{ij} \sigma^{T_B} ] \leq 1/2$, the statement is shown. Let us define $M \coloneqq \sum_{i>j} P_{ij}$, then using cyclicity of the trace, we have
    \begin{equation}
        \tr[M \sigma^{T_B}] 
        = \tr[M^{T_B} \sigma]
        \leq \lambda_{\max}(M^{T_B}) \, ,
    \end{equation}
    where $\lambda_{\max}(M^{T_B})$ denotes the largest eigenvalue of $M^{T_B}$. To bound this quantity we write 
    \begin{align}
        M^{T_B} &= \left(\sum_{i>j} P_{ij} = \frac{1}{2} \sum_{i>j} \ket{ij}\bra{ij} + \ket{ji}\bra{ji} -\ket{ij}\bra{ji} - \ket{ji}\bra{ij} \right)^{T_B} \\
        &= \frac{1}{2} \sum_{i>j} \ket{ij}\bra{ij} + \ket{ji}\bra{ji} -\ket{ii}\bra{jj} - \ket{jj}\bra{ii} \\
        &= \frac{1}{2} \sum_{i\neq j} \ket{ij}\bra{ij} -\ket{ii}\bra{jj} \, .
    \end{align}
    In this form, it is easy to see that $M^{T_B} \leq \frac{1}{2} \mathbb{1}$, since
    \begin{align}
        \frac{1}{2} \mathbb{1} - M^{T_B} &= \frac{1}{2} \sum_{ij} \ket{ij}\bra{ij} - \frac{1}{2} \sum_{i\neq j} \ket{ij}\bra{ij} + \frac{1}{2} \sum_{i\neq j} \ket{ii}\bra{jj} \\
        &= \frac{1}{2} \left( \sum_i \ket{ii}\ket{ii} +  \sum_{i\neq j} \ket{ii}\bra{jj} \right) \\
        &= \frac{1}{2} (\sum_i \ket{ii}) (\sum_j \bra{jj})
    \end{align}
    which is proportional to a rank 1 projector and therefore greater equal 0. Therefore the largest eigenvalue can be bounded by $1/2$ and we get
    \begin{equation}
        t \geq R_{\SEP}(\ket{\psi})\, ,
    \end{equation}
    which concludes the proof. 
\end{proof}

In previous work (see \cite{piv23,SPS24}), the main technical tool to find lower bounds for $\gamma_{\LO^\star}$ and $\gamma_{\LOCC^\star}$ was not based on $\gamma_{\PPT}$ but on a different approach.
Consider any bipartite channel $\cE\in\CPTP(AB\rightarrow A'B')$ and any separable state $\sigma\in\SEP(A\bar{A},B\bar{B})$ for some ancillary systems $\bar{A},\bar{B}$.
Then one has
\begin{equation} \label{eq_choi_bound_trick}
    \gamma_{\LOCC^{\star}}(\cE) \geq \gamma_{\SEP}\big(\cE(\sigma)\big) \, ,
\end{equation}
since any QPD of $\cE$ directly induces a QPD of $\cE(\sigma)$ with same $\ell_1$-norm of its coefficients.
Typically, $\sigma$ is chosen to be the product of maximally entangled states across $A,\bar{A}$ and $B,\bar B$, respectively.
The resulting state $\cE(\sigma)$ is then the Choi state of $\cE$.


The following example illustrates that $\gamma_{\PPT}(\cU)$ of a unitary channel $\cU$ can be a strictly better lower bound than using a lower bound based on the Choi state.
\begin{example}\label{ex_toffoli}
  Consider the Toffoli gate. There are two ways (up to permutation of control qubits) how to group the three qubits into two bipartitions. By solving the SDP given in~\cref{eq_PPT_SDP} we find for any bipartition that
    \begin{equation}
        \gamma_{\PPT}(\mathrm{Toffoli}) = 3 > \gamma_{\SEP}\big(\cJ(\mathrm{Toffoli})\big)= \frac{1}{4}\Big(\sqrt{6}+\sqrt{2}\Big)^2 - 1 \approx 2.73 \, ,
    \end{equation}
where the penultimate step follows from~\cref{lem_robustness,eq_vidal}.
  It has been shown in~\cite{SPS24} that $\gamma_{\LO^\star}(\mathrm{Toffoli})=3$, therefore the $\PPT$ lower bound turns out to be tight in this case.
\end{example}
As another consequence of~\Cref{prop_ppt_pure_states}, one can also show that the $\PPT$ lower bound is tight for a large class of bipartite unitary channels.
\begin{corollary} \label{cor_PPT_tight}
  Consider a bipartite unitary of the form
  \begin{equation}
    U_{AB} = \left(V_A^{(1)}\otimes V_B^{(2)}\right)  \sum_{k} u_{k} (L_k)_A \otimes (R_k)_B  \left(V_A^{(3)}\otimes V_B^{(4)}\right)  \, ,
  \end{equation}
  where $V_A^{(1)},  V_B^{(2)}, V_A^{(3)}, V_B^{(4)}, (L_k)_A$ and $(R_k)_B$ are unitaries on $A$ and $B$ such that the $(L_k)_k$ and $(R_k)_k$ form orthogonal sets under the Hilbert-Schmidt inner product.
  Denote the channel induced by $U_{AB}$ by $\cU(\rho)\coloneqq U\rho U^{\dagger}$.
  Then
  \begin{align}
    \gamma_{\PPT}(\cU) =
    \gamma_{\LOCC^{\star}}(\cU) =
    \gamma_{\LO^{\star}}(\cU) =
    1 + 2 \sum_{k \ne k'} |u_k| |u_{k'}| \, . \label{eq_ineq}
  \end{align}
\end{corollary}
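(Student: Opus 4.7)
The plan is to sandwich the three quantities between a matching upper and lower bound. By~\Cref{prop_PPT_PTT_star} and~\Cref{fig_results}, we already have
\begin{equation*}
  \gamma_{\PPT}(\cU) = \gamma_{\PPT^\star}(\cU) \leq \gamma_{\LOCC^\star}(\cU) \leq \gamma_{\LO^\star}(\cU),
\end{equation*}
so it suffices to establish
\begin{equation*}
  1 + 2\sum_{k\neq k'}|u_k||u_{k'}| \;\leq\; \gamma_{\PPT}(\cU) \quad \text{and} \quad \gamma_{\LO^\star}(\cU) \;\leq\; 1 + 2\sum_{k\neq k'}|u_k||u_{k'}|.
\end{equation*}
The four local unitaries $V_A^{(1)}, V_B^{(2)}, V_A^{(3)}, V_B^{(4)}$ are free in every set under consideration, so pre- and post-composition by them leaves all extents unchanged; I may therefore assume without loss of generality that $U = \sum_k u_k\,L_k \otimes R_k$.

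The upper bound on $\gamma_{\LO^\star}(\cU)$ I would obtain by citing the explicit QPD construction of~\cite{SPS24} for Schmidt-like (``KAK-like'') unitaries, which already achieves $\ell_1$-norm $1 + 2\sum_{k\neq k'}|u_k||u_{k'}|$ using only $\LO^\star$ operations.

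For the lower bound on $\gamma_{\PPT}(\cU)$ I would route through the Choi state. Let $\sigma \coloneqq \proj{\Phi^+_{A\bar A}} \otimes \proj{\Phi^+_{B\bar B}}$, which is separable (hence $\PPT$) across the cut $A\bar A : B\bar B$. The key sub-claim is that any $\PPT$ channel maps $\PPT$ states to $\PPT$ states; I would verify it at the Choi level by rewriting $\cF(\sigma)^{T_{B'}}$ as a partial trace of $\cJ(\cF)^{T_{BB'}} \geq 0$ against $\sigma^{T_A} \otimes \id_{A'B'} \geq 0$ (using that transposing $B'$ and $B$ together with a full transpose of the input is still positivity-preserving). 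Given this, any $\PPT$-QPD $\cU = \sum_i a_i \cF_i$ pushes forward to $\cU(\sigma) = \sum_i a_i \cF_i(\sigma)$, a QPD of $\cU(\sigma)$ into $\PPT$ states of the same $\ell_1$-norm. Since $\cU(\sigma)$ coincides (up to subsystem relabeling) with the pure Choi state $\cJ(\cU)$, I obtain $\gamma_{\PPT}(\cU) \geq \gamma_{\PPT}(\cJ(\cU))$, which~\Cref{prop_ppt_pure_states} reduces to $\gamma_{\SEP}(\cJ(\cU))$.

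Finally, I evaluate $\gamma_{\SEP}(\cJ(\cU))$ via Vidal's formula. The Hilbert--Schmidt orthogonality of $(L_k)_k$ and $(R_k)_k$ implies that the families $(L_k \otimes \id)\ket{\Phi^+_{A\bar A}}$ and $(R_k \otimes \id)\ket{\Phi^+_{B\bar B}}$ are each orthonormal, so expanding the Choi vector exhibits a Schmidt decomposition across $A\bar A : B\bar B$ with coefficients $(|u_k|)_k$; unitarity of $U$ forces $\sum_k |u_k|^2 = 1$. Combining~\Cref{lem_robustness} with~\cref{eq_vidal} then yields $\gamma_{\SEP}(\cJ(\cU)) = 1 + 2\big[(\sum_k |u_k|)^2 - 1\big] = 1 + 2\sum_{k\neq k'}|u_k||u_{k'}|$, which closes the sandwich. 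The main obstacle I anticipate is the $\PPT$-preservation sub-claim: it is a standard fact in entanglement theory but requires careful Choi-algebra bookkeeping, whereas the remaining steps are essentially an assembly of results already available in the paper.
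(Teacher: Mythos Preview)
Your proposal is correct and follows essentially the same sandwich argument as the paper: the upper bound on $\gamma_{\LO^\star}(\cU)$ is cited from \cite{SPS24}, and the lower bound on $\gamma_{\PPT}(\cU)$ is obtained by pushing a $\PPT$-QPD through to the Choi state and invoking \Cref{prop_ppt_pure_states}. The only cosmetic difference is that the paper cites \cite[Theorem~5.1]{SPS24} for both the upper bound and the value of $\gamma_{\SEP}(\cJ(\cU))$ in one stroke, whereas you re-derive the latter by explicitly identifying the Schmidt coefficients of the Choi vector and applying \Cref{lem_robustness} with \cref{eq_vidal}; the $\PPT$-preservation sub-claim you flag is exactly what the paper subsumes under ``application of the Choi bound.''
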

\begin{proof}
    This is a consequence of~\cite[Theorem 5.1]{SPS24}.
    By simple considerations of hierarchies in the decomposition sets as well as application of the Choi bound and~\cref{prop_ppt_pure_states}, one directly obtains
    \begin{equation}
        \gamma_{\LO^\star}(\cU) \geq \gamma_{\LOCC^\star}(\cU) \geq \gamma_{\PPT^\star}(\cU) \geq \gamma_{\PPT}(\cJ(\cU)) = \gamma_{\SEP}(\cJ(\cU))\, .
    \end{equation}
    The result~\cite[Theorem 5.1]{SPS24} then shows that $\gamma_{\LO^\star}(\cU) = 1 + 2 \sum_{k \ne k'} |u_k| |u_{k'}| =  \gamma_{\SEP}(\cJ(\cU))$
    which concludes the proof.
\end{proof}
Notice, that the unitaries described in~\cref{cor_PPT_tight} contain a large class of unitaries that are used in practice.
Any two-qubit unitary (and tensor products thereof) have the above form due to Cartan's KAK decomposition~\cite{KHANEJA200111}.

An interesting consequence of~\Cref{prop_ppt_pure_states} is that for unitary channels the lower bound $\gamma_{\LOCC^{\star}}\geq \gamma_{\PPT}$ is at least as good as state-based lower bounds, even if one optimizes over the state.
\begin{corollary}\label{corr_statebound}
    For any bipartite unitary channel $\cU\in\CPTP(A B)$ one has
    \begin{equation}
        \gamma_{\LOCC^\star}(\cU) \geq \gamma_{\PPT}(\cU) \geq \sup_{\sigma \in \states(A A' B B')} \frac{\gamma_{\SEP}(\cU(\sigma))}{\gamma_{\SEP}(\sigma)} \, ,
    \end{equation}
    where $A'$ and $B'$ are arbitrarily large ancillary systems.
\end{corollary}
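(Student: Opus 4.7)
The first inequality $\gamma_{\LOCC^\star}(\cU)\geq \gamma_{\PPT}(\cU)$ is immediate from the hierarchy $\LOCC^\star \subset \PPT^\star$ combined with~\Cref{prop_PPT_PTT_star}. The plan for the non-trivial second inequality is to fix an optimal PPT quasiprobability decomposition of $\cU$, apply it pointwise to a pure-product-state decomposition of $\sigma$, and then use~\Cref{prop_ppt_pure_states} to upgrade the resulting PPT bound to a separable bound on the output side. The point that has to be noticed is that the obvious attempt, simply combining the two decompositions, only yields $\gamma_{\PPT}(\cU(\sigma))\leq \gamma_{\PPT}(\cU)\cdot \gamma_{\SEP}(\sigma)$, which is too weak because for general mixed states $\gamma_{\PPT}\leq \gamma_{\SEP}$; the pure-state identity for $\gamma_{\SEP}=\gamma_{\PPT}$ is what closes the gap.

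Concretely, I would write $\cU = \sum_i a_i \cF_i$ with $\cF_i\in \PPT(AB)$ and $\sum_i|a_i|=\gamma_{\PPT}(\cU)$, and $\sigma=\sum_j b_j \tau_j$ where each $\tau_j = \proj{\phi_j}_{AA'}\otimes \proj{\psi_j}_{BB'}$ is a pure product state across $AA'|BB'$ and $\sum_j|b_j|=\gamma_{\SEP}(\sigma)$ (up to arbitrary $\epsilon$, using that every separable state is a convex combination of pure product states without increasing the $\ell_1$-norm). Next, I would invoke two standard closure properties: (i) the identity channel $\id_{A'B'}$ is PPT with respect to the bipartition $A'|B'$ since its Choi operator is a product of maximally entangled states whose partial transpose is again positive, so $\cF_i\otimes \id_{A'B'}$ is a PPT channel across $AA'|BB'$; and (ii) PPT channels map PPT states to PPT states (direct verification on the Choi level using that the partial trace of a product of positive operators is positive). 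Since each $\tau_j$ is separable, hence PPT, this yields that $(\cF_i\otimes\id)(\tau_j)$ is a PPT state.

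The decisive step exploits unitarity: because $\cU$ is unitary and $\tau_j$ is pure, $\cU(\tau_j) = (U_{AB}\otimes I_{A'B'})\tau_j(U_{AB}^\dagger\otimes I_{A'B'})$ is itself a pure state on $AA'BB'$. Applying our QPD of $\cU$ pointwise therefore produces a PPT-decomposition $\cU(\tau_j) = \sum_i a_i(\cF_i\otimes\id)(\tau_j)$ of a \emph{pure} state, giving $\gamma_{\PPT}(\cU(\tau_j))\leq \gamma_{\PPT}(\cU)$. Now~\Cref{prop_ppt_pure_states} upgrades this to $\gamma_{\SEP}(\cU(\tau_j))\leq \gamma_{\PPT}(\cU)$, so each $\cU(\tau_j)$ admits a separable QPD of $\ell_1$-norm at most $\gamma_{\PPT}(\cU)$. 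Concatenating these decompositions in $\cU(\sigma)=\sum_j b_j\cU(\tau_j)$ yields a separable QPD of $\cU(\sigma)$ of total $\ell_1$-norm at most $\gamma_{\SEP}(\sigma)\cdot \gamma_{\PPT}(\cU)$, which after rearranging and taking the supremum over $\sigma$ gives the claim.

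The only real obstacle is the one already flagged: recognizing that one must route the bound through \emph{pure} output states in order for~\Cref{prop_ppt_pure_states} to convert $\gamma_{\PPT}$ into $\gamma_{\SEP}$ on the output. The verification that PPT channels (and in particular $\cF_i\otimes \id$) preserve the PPT property, as well as the pure-product decomposition of $\sigma$, are routine but must be spelled out carefully to make sure the separable decomposition of $\cU(\sigma)$ is genuinely separable across the intended bipartition $AA'|BB'$.
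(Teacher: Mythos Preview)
Your argument is correct and takes a genuinely different route from the paper. The paper first invokes an external result of Campbell on decomposition-based entanglement monotones to argue that $\sup_{\sigma}\gamma_{\SEP}(\cU(\sigma))/\gamma_{\SEP}(\sigma)$ is attained on pure separable inputs, and only afterwards applies \Cref{prop_ppt_pure_states} together with the submultiplicativity $\gamma_{\PPT}(\cU(\sigma))\leq\gamma_{\PPT}(\cU)\gamma_{\PPT}(\sigma)$. Your proof instead bypasses Campbell's theorem entirely: you expand $\sigma$ directly into pure product states, feed each through an optimal $\PPT$ QPD of $\cU$, observe that unitarity keeps $\cU(\tau_j)$ pure so that \Cref{prop_ppt_pure_states} applies termwise, and then reassemble. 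This is more elementary and self-contained; the paper's route has the advantage of phrasing the reduction in resource-theoretic language (worst case on free states), which makes the structure more transparent and generalizes to other monotones. One small remark: your parenthetical justification that $\PPT$ channels map $\PPT$ states to $\PPT$ states (``partial trace of a product of positive operators'') is not quite the right mechanism; the clean way is to note that $\cJ(\cF_i)^{T_{BB'}}\geq 0$ is equivalent to $T_{B'}\circ\cF_i\circ T_B$ being completely positive, from which $(\cF_i(\rho))^{T_{B'}}=(T_{B'}\circ\cF_i\circ T_B)(\rho^{T_B})\geq 0$ whenever $\rho^{T_B}\geq 0$.
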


\begin{proof}
The first inequality in the assertion of~\cref{corr_statebound} follows by noting that
\begin{align}
\gamma_{\LOCC^\star}(\cU)  \geq  \gamma_{\PPT^\star}(\cU)   \overset{\textnormal{\Cshref{prop_PPT_PTT_star}}}{=}  \gamma_{\PPT}(\cU) \, .
\end{align}

It thus remains to prove the second inequality of~\cref{corr_statebound}.
    For any choice of $\sigma \in \states(A A' B B')$, we have
    \begin{equation}\label{eq_resource}
        \gamma_{\LOCC^{\star}}(\cU) \gamma_{\LOCC^{\star}}(\sigma) \geq \gamma_{\LOCC^{\star}}( \cU(\sigma)) \, ,
    \end{equation}
    since two QPD for $\cU$ and $\sigma$ can be combined into a QPD for $\cU(\sigma)$ with $\ell_1$-norm being the product of the $\ell_1$-norms of the original QPDs.
    In~\cite[Theorem 1]{Campbell_2010}, it was shown that any logarithmic decomposition-based entanglement monontone attains the worst-case on the set of free states.
    Concretely, this implies
    \begin{equation}\label{resource}
        \sup_{\sigma \in \states(A A' B B') } \frac{\gamma_{\LOCC^\star}(\cU(\sigma))}{\gamma_{\LOCC^\star}(\sigma)} 
        = \sup_{\rho \in \SEP(AA',BB')} \gamma_{\LOCC^\star}(\cU(\rho)) \, .
    \end{equation}
    Additionally, we have $\gamma_{\LOCC^\star}(\rho) = \gamma_{\SEP}(\rho)$ for states, as explained in~\cref{lem_extent_states}.
    We then find
    \begin{align}
        \sup_{\sigma \in \states(A A'B B') } \frac{\gamma_{\LOCC^\star}(\cU(\sigma))}{\gamma_{\LOCC^\star}(\sigma)} 
        \overset{\textnormal{\Cshref{resource}}}&{=} \sup_{\rho \in \SEP(AA',BB')} \gamma_{\LOCC^\star}(\cU(\rho)) \\
        \overset{\textnormal{\Cshref{lem_extent_states}}}&{=} \sup_{\rho \in \SEP(AA',BB')} \gamma_{\SEP}(\cU(\rho)) \\
        &= \sup_{\psi\in\SEP(AA',BB')} \gamma_{\SEP}(\cU(\ket{\psi}\bra{\psi})) \, ,
\end{align}
where the final step uses that $\gamma_{\SEP}$ is a convex quantity, and optimization of it over all states reduces to an optimization over all pure states.
Using that $\gamma_{\SEP}(\ket{\psi}\bra{\psi}) =1$ for any $\psi\in\SEP$ yields
\begin{align}  
\sup_{\psi\in\SEP(AA',BB')} \gamma_{\SEP}(\cU(\ket{\psi}\bra{\psi}))
         &= \sup_{\psi\in\SEP(AA',BB')} \frac{\gamma_{\SEP}(\cU(\ket{\psi}\bra{\psi}))}{\gamma_{\SEP}(\ket{\psi}\bra{\psi})} \\       
        &\leq \sup_{\psi \in \overline{\mathrm{D}}(AA'BB')} \frac{\gamma_{\SEP}(\cU(\ket{\psi}\bra{\psi}))}{\gamma_{\SEP}(\ket{\psi}\bra{\psi})} \\
        \overset{\textnormal{\Cshref{prop_ppt_pure_states}}}&{=} \sup_{\psi \in \overline{\mathrm{D}}(AA'BB')} \frac{\gamma_{\PPT}(\cU(\ket{\psi}\bra{\psi}))}{\gamma_{\PPT}(\ket{\psi}\bra{\psi})} \\
        &\leq \sup_{\sigma \in \states(AA'BB')} \frac{\gamma_{\PPT}(\cU(\sigma))}{\gamma_{\PPT}(\sigma)} \\
        \overset{\textnormal{\Cshref{eq_resource}}}&{\leq} \gamma_{\PPT}(\cU)   \, .     
    \end{align}
\end{proof}

\section{Discussion}
In this work, we made the interesting observation that classical side information is absolutely crucial to realize circuit cutting without classical communication, whereas it provides no advantage in the $\SEP$ and $\PPT$ setting.
This raises the question: does classical side information help to reduce the quasiprobability extent in the case of $\LOCC$?
In mathematical terms, do there exist channels $\cE$ such that $\gamma_{\LOCC^\star}(\cE)<\gamma_{\LOCC}(\cE)$?
As noted in the discussion of \Cref{lemma_s_s_star}, our current techniques do not suffice to answer this question.
Due to the notoriously complicated structure of $\LOCC$, this problem could be challenging to resolve.

We do note that for some nontrivial families of channels, it is known that $\gamma_{\LOCC^\star}=\gamma_{\LOCC}$.
For instance, it has been shown in~\cite{piv23} that this is the case for Clifford unitaries.
On the other hand, the optimal $\LOCC^{\star}$ QPDs for general two-qubit unitaries introduced in~\cite{SPS24} heavily utilize the classical side information, and it seems unclear how the optimal overhead could also be achieved without it.
This could be seen as evidence for a separation between $\gamma_{\LOCC^{\star}}$ and $\gamma_{\LOCC}$.

More broadly, it remains an important task for future work to better understand the quasiprobability extent of unitary channels.
For unitaries, the quasiprobability extent $\gamma_{\LOCC^{\star}}(\cU)$ is currently only known in two cases: for Clifford gates~\cite{piv23} and for unitaries with Schmidt-like decompositions (including two-qubit unitaries)~\cite{SPS24,anguspaper}.
In both cases, our work shows that the PPT lower bound is tight, i.e.~$\gamma_{\LOCC^{\star}}(\cU)=\gamma_{\PPT}(\cU)$.\footnote{For Clifford gates, the statement follows from \Cref{prop_ppt_pure_states}, \Cref{corr_statebound}, and $\gamma_{\LOCC^{\star}}(\cU)=\gamma_{\SEP}(\cJ(\cU))$, which is proven in~\cite[Theorem 5.1]{piv23}.}
It is natural to ask whether this equality also holds for general bipartite unitaries.
Such a result would lift the statement of \Cref{prop_ppt_pure_states} from the setting of states to channels.
From a more practical side, this would allow us to efficiently evaluate the quasiprobability extent for arbitrary unitary channels, which is relevant for circuit cutting applications.


\paragraph{Acknowledgements}
CP and LS acknowledge support from the ETH Quantum Center and the NCCR SwissMAP.

\appendix

\section{Extent separations}\label{app:remaining_inequalities}
In this section, we show that the quasiprobability extents of different decomposition sets differ as depicted by the horizontal inequalities in~\Cref{fig_results}.
In technical term we show that
\begin{equation}\label{eq:horizontal_ineq_1}
    \gamma_{\LO^\star} > \gamma_{\LOCC^\star} > \gamma_{\SEP^\star} > \gamma_{\PPT^\star}
\end{equation}
and
\begin{equation}\label{eq:horizontal_ineq_2}
    \gamma_{\LO} > \gamma_{\LOCC} > \gamma_{\SEP} > \gamma_{\PPT}\, ,
\end{equation}
where $\gamma_{S_1} > \gamma_{S_2}$ means that there exits a channel $\cE$ such that $\gamma_{S_1}(\cE) > \gamma_{S_2}(\cE)$.
This is a direct consequence of the following result.
\begin{lemma}
    Consider two sets of quantum instruments $Q_1\subset\qi(I \rightarrow O)$, $Q_2\subset\qi(I \rightarrow O)$ and denote the induced sets of channels by $S_i \coloneqq \{ \sum_j\cE_j : (\cE_j)_j \in Q_i \}$ and $S_i^{\star}\coloneqq\cL[Q_i]$  for $i=1,2$.
    Furthermore, assume that \smash{$\overline{\conv(S_1)} \subsetneqq S_2$} where \smash{$\overline{\conv(S_1)}$} denotes the closure of the convex hull of $S_1$.
    Then, there exists a channel $\cE\in\cptp(I \rightarrow O)$ such that
    \begin{equation}
        \gamma_{S_2}(\cE) = \gamma_{S_2^{\star}}(\cE) = 1 \qquad\text{and}\qquad \gamma_{S_1}(\cE),\gamma_{S_1^{\star}}(\cE) > 1 \, .
    \end{equation}
\end{lemma}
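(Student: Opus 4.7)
The plan is to take $\cE$ to be any element of $S_2 \setminus \overline{\conv(S_1)}$, which exists by hypothesis. Since $\cE \in S_2 \subset \CPTP$, the trivial decomposition $\cE = 1 \cdot \cE$ shows $\gamma_{S_2}(\cE) \leq 1$, and the inclusion $S_2 \subseteq S_2^\star$ gives $\gamma_{S_2^\star}(\cE) \leq 1$. For the matching lower bound I would use that any $\cF \in \cL[Q_2]$ satisfies $|\tr[\cJ(\cF)]| \leq \dim(I)$: writing $\cF = \sum_j b_j \cG_j$ with $|b_j|\leq 1$ and $(\cG_j)_j \in Q_2$ gives $|\tr[\cJ(\cF)]| \leq \sum_j |b_j| \tr[\cJ(\cG_j)] \leq \sum_j \tr[\cJ(\cG_j)] = \dim(I)$ since $\sum_j \cG_j$ is CPTP. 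Taking the trace of any QPD $\cE = \sum_i a_i\cF_i$ then yields $\dim(I) \leq \sum_i|a_i|\dim(I)$, so $\gamma_{S_2^\star}(\cE) \geq 1$. Hence $\gamma_{S_2}(\cE) = \gamma_{S_2^\star}(\cE) = 1$.

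The substantial step is to show $\gamma_{S_1^\star}(\cE) > 1$, from which $\gamma_{S_1}(\cE) \geq \gamma_{S_1^\star}(\cE) > 1$ follows because $S_1 \subseteq S_1^\star$. I would prove the contrapositive: if $\cE$ is CPTP and $\gamma_{S_1^\star}(\cE) = 1$, then $\cE \in \overline{\conv(S_1)}$. Fix QPDs $\cE = \sum_i a_i^{(n)} \cF_i^{(n)}$ with $\cF_i^{(n)} \in S_1^\star$ and $T_n \coloneqq \sum_i |a_i^{(n)}| \to 1$. Unpacking $\cF_i^{(n)} = \sum_k b_{ik}^{(n)} \cG_{ik}^{(n)}$, $|b_{ik}^{(n)}| \leq 1$, $(\cG_{ik}^{(n)})_k \in Q_1$, and splitting by the sign of $b_{ik}^{(n)}$, one obtains $\cF_i^{(n)} = \cF_i^{+,(n)} - \cF_i^{-,(n)}$ with $\cF_i^{\pm,(n)}$ CP and $\cF_i^{+,(n)} + \cF_i^{-,(n)} \leq \cH_i^{(n)} \coloneqq \sum_k \cG_{ik}^{(n)} \in S_1$ in the Choi order (since $|b_{ik}^{(n)}| \leq 1$). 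Grouping the whole QPD in the same way gives $\cE = \cP_n - \cN_n$ with $\cP_n,\cN_n$ CP and $\cP_n + \cN_n \leq T_n \bar{\cH}_n$, where $\bar{\cH}_n \coloneqq \sum_i (|a_i^{(n)}|/T_n)\, \cH_i^{(n)} \in \conv(S_1)$.

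I would then define $\cS_n \coloneqq T_n\bar{\cH}_n - \cE = (T_n\bar{\cH}_n - \cP_n) + \cN_n$. Both summands are CP (the first by $T_n\bar{\cH}_n \geq \cP_n + \cN_n \geq \cP_n$ in the Choi order), so $\cS_n$ is CP. Taking traces of the Choi states, $\tr[\cJ(\cS_n)] = T_n\dim(I) - \dim(I) = (T_n-1)\dim(I) \to 0$. Because the trace of a positive semidefinite operator is zero iff the operator is zero, $\cJ(\cS_n) \to 0$ and hence $\cS_n \to 0$ as a superoperator. Rearranging, $\bar{\cH}_n = (\cE + \cS_n)/T_n \to \cE$, and since $\bar{\cH}_n \in \conv(S_1)$ for every $n$, we conclude $\cE \in \overline{\conv(S_1)}$, contradicting the choice of $\cE$.

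The main obstacle is the Choi-order domination step: one must verify that \emph{every} element of $S_1^\star$ decomposes as a difference of two CP maps whose sum is Choi-dominated by an element of $S_1$, and that this bound survives the grouping into $\cP_n,\cN_n$. Once this structural inequality $\cP_n + \cN_n \leq T_n \bar{\cH}_n$ is in hand, the trace computation and the closure argument are essentially immediate, and the rest of the lemma reduces to standard facts about extents of CPTP maps.
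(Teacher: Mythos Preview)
Your proof is correct and follows essentially the same route as the paper: pick $\cE\in S_2\setminus\overline{\conv(S_1)}$, get $\gamma_{S_2}=\gamma_{S_2^\star}=1$ trivially, then assume $\gamma_{S_1^\star}(\cE)=1$ and build a sequence in $\conv(S_1)$ converging to $\cE$ via the Choi-order domination and trace argument. The paper reaches the same Choi inequality more directly by setting $\tilde{\cE}\coloneqq\sum_{i,j}\lvert a_i\rvert\,\cG_j^{(i)}$ and observing $\cJ(\cE)=\sum_{i,j}a_i b_j^{(i)}\cJ(\cG_j^{(i)})\leq\sum_{i,j}\lvert a_i\rvert\,\cJ(\cG_j^{(i)})=\cJ(\tilde{\cE})$, which is exactly your $T_n\bar{\cH}_n$ and your statement that $\cS_n$ is CP; your $\cP_n,\cN_n$ splitting is just a more explicit repackaging of the same bound (and your worry about this ``domination step'' is unfounded---it works exactly as you wrote). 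One cosmetic point: with the paper's normalized Choi convention $\tr[\cJ(\cF)]\leq 1$ rather than $\dim(I)$, but this does not affect the argument.
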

Our desired statement in \Cref{eq:horizontal_ineq_1,eq:horizontal_ineq_2} is a direct consequence of the above lemma, by appropriately choosing $Q_1,Q_2$ to be either $\qi_{\LO},\qi_{\LOCC}$, $\qi_{\LOCC},\qi_{\SEP}$ or $\qi_{\SEP},\qi_{\PPT}$.
Indeed, it is known that $\overline{\conv(\LO)}$ (which contains local operations with shared randomness) is a strict subset of $\LOCC$, the closure $\overline{\LOCC}$ is a strict subset of $\SEP$ (this has been demonstrated in the context of state discrimination~\cite{bennett1999_quantum,koashi2007_quantum,childs2013_framework} and random distillation~\cite{cui2011_randomly,chitambar2012_increasing}) and that $\SEP$ (which is convex and closed) is a strict subset of $\PPT$ (see e.g.~\cite{bennett1999_unextendible}).

\begin{proof}
    Let us pick some $\cE$ that lies in $S_2$ but not in $\overline{\conv(S_1)}$.
    Clearly, $\gamma_{S_2}(\cE) = \gamma_{S_2^{\star}}(\cE) = 1$.
    We will now prove by contradiction that $\gamma_{S_1^{\star}}(\cE)$ must be $>1$.
    This in turn will also imply that $\gamma_{S_1}(\cE)>1$, since $S_1\subset S_1^{\star}$.

    Let us assume that $\gamma_{S_1^{\star}}(\cE)=1$.
    Therefore, for any $\epsilon >0$ there exists a QPD $\cE=\sum_i a_i\cF_i$ with $\cF_i\in S_1^{\star}$ and $\norm{a}_1\leq 1+\epsilon$.
    We can write $\cF_i=\sum_j b_j^{(i)}\cG^{(i)}_j$ with $b_j^{(i)} \in [-1,1]$ and $(\cG_j^{(i)})_j\in Q_1$.

    Now, define the superoperator $\tilde{\cE}\coloneqq \sum_{i,j}\abs{a_i}\cG_{j}^{(i)}$.
    For the Choi states of $\cE$ and $\tilde{\cE}$ it holds that
    \begin{equation}
        \cJ(\cE) = \sum_{i,j} a_i b_j^{(i)} \cJ(G_j^{(i)}) \leq \sum_{i,j} \abs{a_i} \cJ(G_j^{(i)}) = \cJ(\tilde{\cE}) \, .
    \end{equation}
    Hence, the trace distance between these two Choi states is
    \begin{align}
        \norm{\cJ(\cE)-\cJ(\tilde{\cE})}_1 = \tr[\cJ(\tilde{\cE})-\cJ(\cE)] 
        = \sum_{i,j} |a_i| \tr[\cJ(G_j^{(i)})] -1
        = \sum_i |a_i| - 1
        \leq \epsilon \, ,
    \end{align}
    The channel $\frac{1}{\norm{a}_1}\tilde{\cE}$ lies in $\conv(S_1)$ and approximates $\cE$ up to error
    \begin{equation}
        \norm{\cJ(\cE)-\frac{1}{\norm{a}_1}\cJ(\tilde{\cE})}_1 \leq \norm{\cJ(\cE)-\cJ(\tilde{\cE})}_1 + \norm{\cJ(\tilde{\cE})-\frac{1}{\norm{a}_1}\cJ(\tilde{\cE})}_1 \leq \epsilon + \frac{\epsilon}{1+\epsilon} \leq 2\epsilon \, .
    \end{equation}
    In summary, we have shown that our map $\cE$ can be approximated arbitrarily well by elements in $\conv(S_1)$.
    This implies that $\cE$ lies in the closure of $\conv(S_1)$, which is a contradiction.
\end{proof}

\bibliographystyle{arxiv_no_month}
\bibliography{bibliofile}

\end{document}